\documentclass[12pt]{article}
\usepackage[left=3cm,bottom=3cm,right=3cm,top=3cm]{geometry}
\usepackage{amstext}
\usepackage{latexsym}
\usepackage[utf8]{inputenc}
\usepackage[T1]{fontenc}
\usepackage{amsmath, amsthm}
\usepackage{graphicx}
\usepackage{diagbox}
\usepackage{multirow}
\usepackage[margin=10pt,font=small,labelfont=bf,labelsep=endash]{caption}
\usepackage{mathrsfs}
\usepackage{amssymb}
\usepackage{datetime}
\usepackage{tikz}
\usepackage{abstract}
\usepackage{color}

\usetikzlibrary{fit,shapes.misc}

\sloppy 

\numberwithin{equation}{section}
\newtheorem{defi}{Definition}[section]
\newtheorem{lemma}{Lemma}[section]

\newtheorem{prop}[lemma]{Proposition}
\newtheorem{theorem}[lemma]{Theorem}
\newtheorem{rem}[lemma]{Remark}
\newtheorem{con}[lemma]{Conjecture}

\begin{document}
\title{Models for Self-Gravitating Photon Shells and Geons}

\author{H{\aa}kan Andr\'{e}asson, David Fajman, Maximilian Thaller}

\date{November 4th, 2015}

\maketitle

\begin{abstract}
We prove existence of spherically symmetric, static, self-gravitating photon shells as solutions to the massless Einstein-Vlasov system. The solutions are highly relativistic in the sense that the ratio $2m(r)/r$ is close to $8/9$, where $m(r)$ is the Hawking mass and $r$ is the area radius. In 1955 Wheeler constructed, by numerical means, so called idealized spherically symmetric geons, i.e. solutions of the Einstein-Maxwell equations for which the energy momentum tensor is spherically symmetric on a time average. The structure of these solutions is such that the electromagnetic field is confined to a thin shell for which the ratio $2m/r$ is close to $8/9$, i.e., the solutions are highly relativistic photon shells. The solutions presented in this work provide an alternative model for photon shells or idealized spherically symmetric geons.

\end{abstract}

\section{Introduction}
In the last 25 years the existence problem for the static Einstein-Vlasov system has been studied intensively. Several works have established the existence of mainly static but also of stationary solutions modeling ensembles of self-gravitating collisionless massive particles in equilibrium. The results include spherically symmetric, asymptotically flat, globally regular solutions with energy density of compact support \cite{rr92_static,rein93,rein_shells,rr00,a072,rare13}, solutions with black holes \cite{rein93}, axially symmetric static~\cite{akr11} and stationary solutions~\cite{akr14}, and static solutions of several types for non-vanishing cosmological constant \cite{aft15}. These works have in common that they consider the case of massive particles and thereby primarily provide models for large scale objects such as galaxies or galaxy clusters.\par
By setting the rest mass in the Einstein-Vlasov system to zero one obtains the corresponding massless system, which describes an ensemble of uncharged self-gravitating massless particles, e.g.~photons. The Cauchy problem for the massless Einstein-Vlasov system has been studied for small spherically symmetric perturbations of Minkowski space by Dafermos \cite{Da06}. Recently numerical evidence for the existence of static solutions to this system was given by Akbarian and Choptuik \cite{AkCh14}. It is the purpose of this work to prove that a class of static solutions to the massless Einstein-Vlasov system indeed exists. \par

\subsection*{Photon shells as particle-like solutions}

The results presented in this paper may also be considered in the wider context of the theory of static solutions to the Einstein equations. For the vacuum Einstein-equations and the Einstein-Maxwell equations classical rigidity theorems imply that the only regular, asymptotically flat, static solution is Minkowski space. Due to this, the construction of models for isolated self-gravitating objects resorts to other types of matter coupled to the Einstein equations. While classical matter models such as fluids, massive collisionless particles, and elastic matter are primarily used for describing macroscopic objects, one motivation for considering field theoretical equations is to construct models for elementary particles - so called \emph{particle-like solutions}. In 1955 Wheeler introduced the concept of \emph{geons} which are singularity free solutions of the Einstein-Maxwell equations, cf.~also \cite{ab97} and \cite{BrHa64}.
His main purpose was to construct solutions which describe effectively a localized particle-like object without introducing mass or charge into the system - the expressions \emph{mass without mass} and \emph{charge without charge} originate from~\cite{W55}. In order to obtain tractable equations Wheeler considered the case of, what he called, \emph{idealized spherically symmetric geons}. These solutions are spherically symmetric only on time average. In Wheeler's treatment the details of the electromagnetic field are thus replaced by the time average of the components of the electromagnetic stress-energy tensor. The resulting equations are solved numerically, cf.~\cite{W55}, and a thin shell of self-gravitating electromagnetic radiation is obtained for which the ratio $2m/r$ is close to $8/9$, where $m$ is the ADM mass and $r$ is the area radius. In fact, Wheeler's numerical solution is closely related to the result we prove in this work. Indeed, if the massless particles are photons, the solutions we construct describe thin photon shells. As in Wheeler's study the ratio $2m/r$ of the shells is close to $8/9$ also in our case. We point out that necessarily $2m/r<8/9$ for any static solution of the Einstein-Vlasov system, cf.~\cite{a08}. This bound also applies in Wheeler's case since the form of the energy momentum tensor for the Maxwell field studied in~\cite{W55} satisfies the assumptions in~\cite{a08}. In conclusion, the massless Einstein-Vlasov system and the Einstein-Maxwell system may in this case be considered as two different mathematical models describing the same physical phenomena, where the former is a particle model and the latter is a field theoretical model. \par
More recently, the existence of a different type of particle-like solutions has been established, i.e., regular spherically symmetric static solutions of the Einstein-Yang-Mills system.
Numerical evidence for their existence was first found by Bartnik and McKinnon \cite{BaMc88}. Eventually, a rigorous proof was obtained by Smoller et al.~\cite{SmWaYaMc91}. It is important to point out that all particle-like solutions discussed above are likely unstable, cf. \cite{W55,AkCh14,ar06,BaMc88}. 

\subsection*{Difficulties and idea of proof}
As mentioned above, for the massive Einstein-Vlasov system several results on existence of static spherically symmetric solutions are available that cover classes of different generality. An essential part in these proofs is to show that the matter quantities are compactly supported. This is clearly a necessary condition for modeling isolated objects. \par
The majority of the existing proofs work indirectly and rely on the asymptotic behavior of the metric, and they have in common that they show the \textit{existence} of a radius that bounds the support of the matter. The value of which however remains undetermined. \par
For the massless Einstein-Vlasov system the situation is different. In section \ref{proofs} we introduce what we call the {\em characteristic function}
\begin{equation} \label{gamma_pre}
\gamma(r) = y(r)-\frac 1 2 \ln\left[m_p^2+\frac{L_0}{r^2}\right],
\end{equation}
where $y$ is defined in \eqref{def_y}, $m_p$ is the particle mass, and $L_0$ is a fixed constant. This function has the property that if and only if $\gamma(r) \leq 0$ all matter quantities vanish at this radius $r$. Equation (\ref{eq_yp}) implies that $y$ is monotonically decreasing in $r$. Consider now the massive case with $m_p = 1$. In this case the logarithm in (\ref{gamma_pre}) is positive for all $r$ and it is known \cite{rare13} that there exists a radius $\tilde R$ such that $y(\tilde R)=0$. Hence $\gamma(r) \leq 0$ for all $r\geq \tilde R$. In the massless case however, when $m_p=0$, the term containing the logarithm behaves completely differently. Assume now that there is a radius $R_1$ such that $\gamma(R_1) = 0$ and $\gamma'(R_1) < 0$. For $r > R_1$ and as long as $\varrho(r)=p(r)=0$, $y$ is given by the Schwarzschild component (\ref{yss}) and we have
\begin{equation}
\gamma(r) = \ln(E_0) -\frac 1 2 \ln\left[1-\frac{2m(R_1)}{r}\right]-\frac 1 2 \ln\left[\frac{L_0}{r^2}\right].
\end{equation}
In view of \cite{a08} we have $\frac{2m(R_1)}{r} \leq \frac 8 9$. So we see that $\gamma(r)$ becomes positive again at some radius and matter reappears. In particular there exists no radius $\tilde R$ such that $\gamma(r)\leq 0$ for all $r \geq \tilde R$. The asymptotic behavior resembles an infinitely extended thin atmosphere of massless Vlasov matter. \par
In this work we use the approach in~\cite{a072} which shows the existence of static spherically symmetric arbitrarily thin shell solutions of the massive Einstein-Vlasov system. More precisely, in~\cite{a072} it is shown that the matter quantities are supported in an interval $[R_0,R_1]$, where the ratio $R_1/R_0$ can be made as close to 1 as desired. The motivation for the study~\cite{a072} was to show that the inequality $2m(r)/r\leq 8/9$ can be arbitrarily well saturated by solutions to the massive Einstein-Vlasov system. In fact, thin shell solutions correspond to highly relativistic solutions for which the ratio $2m(r)/r$ is close to $8/9$. In order to show that the interval of support $[R_0,R_1]$ has the property that $R_1/R_0$ is close to 1 it is of course necessary to obtain detailed control of the radius $R_1$, which is the smallest radius greater than $R_0$ for which the matter quantities vanish. When such a radius has been established the ansatz for the matter distribution can be dropped and the matter quantities are set to zero for $r\geq R_1$. The metric in this -now infinitely extended- vacuum region is the Schwarzschild metric with a mass parameter corresponding to the mass of the enclosed matter shell. In this way a pure matter shell is obtained. We remark that if the solution in the massive case is not continued by a Schwarzschild solution at $r=R_1$, other types of solutions can be constructed, e.g. solutions which contain several shells separated by vacuum regions or separated by thin atmospheres, cf.~\cite{ar07}. These solutions are nevertheless asymptotically flat, i.e., there exists a radius $R_2>R_1$ such that the matter quantities vanish for $r\geq R_2$. Furthermore, in the massive case the ratio $2m/r$ does not have to be close to its upper bound $8/9$ but can be small. Thus static solutions of the massive Einstein-Vlasov system with matter quantities of bounded support do not have to be highly relativistic. \par
In this work we modify the method in~\cite{a072} to the massless case and obtain solutions supported in an interval $[R_0,R_1]$ such that the solutions agree with the Schwarzschild solution for $r\geq R_1$. As was outlined above, it is here necessary to continue the solution with a Schwarzschild solution at a radius where the matter quantities become zero since the solution given globally by an ansatz of the form (\ref{ansf}) is not asymptotically flat.\par

\subsection*{Criteria for shell formation}
A crucial condition for proving the existence of a radius $R_1$, where the matter quantities vanish, is that the there is a radius, called $r_2$, such that the ratio $2m(r_2)/r_2\geq 4/5$. Hence the solutions we construct are highly relativistic.
It is an important question to decide whether or not this is a necessary condition for the existence of static photon shells.
The answer to this question is essential for the physical interpretation of the solution. We conclude the paper by providing numerical evidence for the following conjecture.
\begin{con} \label{our_con}
Let $m(r)$ be the Hawking mass of a solution of the massless Einstein-Vlasov system given in (\ref{def_m}) and $r\in [0,\infty)$ be the areal radius. Then a necessary condition for the existence of massless, asymptotically flat solutions is
\begin{equation}
\Gamma := \sup_{r\in [0,\infty)} \frac{2m(r)}r\geq \frac{4}{5}.
\end{equation}
\end{con}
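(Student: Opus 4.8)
The plan is to argue by contraposition: I would show that if a static solution of the massless system satisfies $2m(r)/r<4/5$ for every $r$ (i.e.\ $\Gamma<4/5$), then it cannot be asymptotically flat. The natural device is the characteristic function \eqref{gamma_pre}, which for $m_p=0$ reduces to $\gamma(r)=y(r)+\ln r-\tfrac12\ln L_0$. As explained above, a configuration given globally by the ansatz is never asymptotically flat, since $y\to\mathrm{const}$ in any exterior vacuum region while $\ln r\to\infty$, so $\gamma(r)\to+\infty$ and matter reappears at large radii. Hence an asymptotically flat solution can only arise by truncating the matter at a radius $R_1$ where it vanishes continuously, i.e.\ where $\gamma(R_1)=0$ with $\gamma$ positive just inside. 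Since matter switches on at an inner radius $R_0$ with $\gamma(R_0)=0$, the function $\gamma$ must climb away from zero and then return to it on $[R_0,R_1]$. The whole problem thus reduces to showing that this return to zero is impossible while the compactness stays below $4/5$.

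The main tool is the sign of $\gamma'$. Using the mass equation together with \eqref{eq_yp}, a short computation gives
\begin{equation}
\gamma'(r)=\frac{r-3m(r)-4\pi r^3 p(r)}{r\,(r-2m(r))},
\end{equation}
so that, writing $w=2m/r$, the turning points of $\gamma$ are exactly the radii where $\tfrac32 w(r)+4\pi r^2 p(r)=1$, and $\gamma$ decreases precisely where $\tfrac32 w+4\pi r^2 p>1$. On the outer interval $(r_\ast,R_1)$, with $r_\ast$ the last maximum of $\gamma$, this inequality must hold throughout. I would then estimate the radial pressure term $4\pi r^2 p$ from above in terms of $w$ alone, using the field equations together with the massless equation of state, for which the energy-momentum tensor is trace free, $\varrho=p+2p_T$, while the ansatz enforces $0\le p\le\varrho$ and $p_T\ge 0$. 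Feeding such a bound $4\pi r^2 p\le\Phi(w)$ into $\tfrac32 w+4\pi r^2 p>1$ produces a pointwise lower bound on $w$ along $(r_\ast,R_1)$; the content of the statement is that the sharp such bound, combined with the requirement that $\gamma$ actually descend by the full amount $\gamma(r_\ast)$ it has gained, cannot be satisfied unless $w$ reaches $4/5$ somewhere.

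The hard part is closing this final estimate with the correct constant. The pressure $p$ is not a local function of $m$ and $r$ but a nonlocal functional of the distribution function, so the differential inequality for $\gamma$ stays coupled to the matter integrals and does not collapse to an autonomous scalar ODE, and there is no evident monotone or conserved functional that singles out the number $4/5$. One must moreover rule out configurations that are not single shells — for instance nested shells separated by vacuum or by thin atmospheres, as in the massive case \cite{ar07} — since each such branch gives $\gamma$ a further opportunity to return to zero, forcing the argument to control the entire profile rather than one descending arc. For these reasons I expect a complete analytic proof to be delicate, and the realistic route to supporting the statement is the one adopted here: integrate the reduced system numerically over the relevant parameter range, compute $\Gamma=\sup_r 2m(r)/r$ for each solution that is forced to terminate, and check that $\Gamma\ge 4/5$ in every case. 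This is why the result is recorded as a conjecture rather than a theorem.
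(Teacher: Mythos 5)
You have correctly recognized that this statement is Conjecture \ref{our_con} and that the paper offers no analytic proof of it: its entire support is the numerical parameter scan of section \ref{sec_numerics} (Figure \ref{table}), where the reduced system is integrated over a grid of $(\ell,y_0)$ at fixed $k$ and $L_0$ and separated matter shells are observed to form only when $\Gamma\geq 0.8$. Since your proposal also concludes that the realistic route is numerical verification, you land exactly where the paper does; what you add, and the paper does not spell out in this context, is an analytic scaffolding for the contrapositive. That scaffolding is sound as far as it goes: your formula
\begin{equation}
\gamma'(r)=\frac{r-3m(r)-4\pi r^{3}p(r)}{r\,\bigl(r-2m(r)\bigr)}
\end{equation}
follows from \eqref{eq_yp} and is equivalent to \eqref{gammas}; the identification of the turning points of $\gamma$ with the locus $\tfrac32 w+4\pi r^{2}p=1$ is correct; your framing of asymptotic flatness as requiring a second zero of $\gamma$ beyond $R_0$ agrees with the discussion around \eqref{gamma_pre} and with the gluing in the proof of Theorem \ref{new_main_theorem}; and your diagnosis of the obstruction --- that $p$ is a nonlocal functional of $f$, so the descent condition does not close into an autonomous inequality that singles out the constant $4/5$ --- is precisely why the statement is recorded as a conjecture. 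One remark in your favor: for massless particles the relation $\varrho=p+2p_T$ is an exact identity, since by \eqref{mat11}--\eqref{mat13} one has $(x\cdot v)^2/r^2+|x\times v|^2/r^2=|v|^2$ pointwise in the integrand; you may use this as an equality rather than merely the inequality $0\leq p\leq\varrho$ (and, incidentally, this shows that the strictly positive lower bound on $z=\varrho-p-2p_T$ asserted in Lemma \ref{lem_matter} cannot be correct as stated, being an artifact of the massive case). In short, your proposal contains no error; it is a more structured account of why only numerical evidence is available, and it coincides with the paper's approach at the point where a proof would actually be required.
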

\noindent In particular, we conjecture that the photon shells are necessarily highly relativistic.

\subsection*{Outline of the paper} This work is organized as follows. In section \ref{preliminaries} we introduce the massless Einstein-Vlasov system in spherical symmetry and review known results that this work relies on. In section \ref{sec_theorem} we state our main result, and section \ref{proofs} is devoted to its proof. Finally, in section \ref{sec_numerics}, we solve the equations numerically to investigate which conditions on the parameters are \textit{necessary} for the existence of photon shells.

\subsection*{Acknowledgments}
We are grateful for discussions with Peter Aichelburg, Bobby Beig, Piotr Chru\'sciel, Walter Simon, Olof Giselsson, and Carsten Gundlach. The second author gratefully acknowledges the hospitality of Chalmers University of Technology, where parts of this work have been carried out.

\section{Preliminaries} \label{preliminaries}

For a given space-time $(M,g)$, the mass-shell of massless particles is defined to be
\begin{equation}
\mathscr P=\{(x,p)\in TM\,:\,g(p,p) =0, p\;\mbox{future directed}, p\neq 0\}.
\end{equation}
In general the particle rest mass $m_p$ is given by the expression $g(p,p) = -m_p^2$. So this relation constitutes the essential difference to the massive system.\par
We consider the static, massless Einstein-Vlasov system in spherical symmetry. First, we introduce the system of equations. In Schwarzschild coordinates the metric is of the form
\begin{equation}
g = -e^{2\mu(r)} \mathrm dt^2 + e^{2\lambda(r)} \mathrm dr^2 + r^2 \mathrm d\vartheta^2 + r^2 \sin^2\vartheta \mathrm d\varphi^2,
\end{equation}
where $r\in [0,\infty)$, $\vartheta\in[0,\pi)$, and $\varphi\in[0,2\pi)$. Throughout this work units are chosen such that $c=G=1$, where $G$ is the gravitational constant and $c$ is the speed of light. The Einstein equations reduce to
\begin{eqnarray}
e^{-2\lambda}\left(2r\lambda'-1\right)+1 &=& 8\pi r^2\varrho(r) \label{eeq1}, \\
e^{-2\lambda}\left(2r\mu'+1\right)-1 &=& 8\pi r^2p(r). \label{eeq2} 
\end{eqnarray}
A prime denotes the derivative with respect to the coordinate $r$. Since we are interested in asymptotically flat solutions we assume the boundary conditions
\begin{equation} \label{bound1}
\lim_{r\to\infty} \mu(r) = \lim_{r\to\infty} \lambda(r) = 0.
\end{equation}
A regular center at $r=0$ is guaranteed by the boundary condition
\begin{equation} \label{bound_center}
\lambda(0)=0.
\end{equation}
The matter considered in this work consists of an ensemble of free falling particles of zero rest mass. These particles are modeled by a density function $f:\mathscr P \to \mathbb R_+$, $(t,x,v)\mapsto f(t,x,v)$, where $t\in\mathbb R$, $x\in\mathbb R^3$, $v\in\mathbb R^3\setminus\{0\}$. In a static setting this function is not depending on $t$ and spherical symmetry means $f(x,v) = f(Ax,Av)$ for any matrix $A\in SO(3)$. The matter distribution function $f$ satisfies the Vlasov equation
\begin{equation} \label{eq_vlasov}
\frac{e^{\mu-\lambda}}{|v|}v^a\frac{\partial f(x,v)}{\partial x^a}-e^{\mu-\lambda}\frac{x^a}{r}|v|\mu'\frac{\partial f(x,v)}{\partial v^a} = 0
\end{equation}
where $|v| = \sqrt{\delta_{ab}v^a v^b}$ and we are using coordinates $v$ such that we have an orthonormal frame on the mass shell $\mathscr P$. See \cite{rr92} for a definition of $v$ and a derivation of the full system in spherical symmetry. The equations are coupled via the matter quantities $\varrho$, $p$, and $p_T$ which are given by
\begin{align}
\varrho(r) &= \int_{\mathbb R^3\setminus\{0\}} f(x,v) |v|\; \mathrm dv^1\mathrm dv^2 \mathrm dv^3, \label{mat11}\\
p(r) &= \int_{\mathbb R^3\setminus\{0\}} \frac{f(x,v)}{|v|} \left[\frac{\delta_{ab} x^a v^b}{r}\right]^2 \mathrm dv^1\mathrm dv^2 \mathrm dv^3, \label{mat12}
\end{align}
and
\begin{equation}
p_T(r) = \frac 1 2 \int_{\mathbb R^3\setminus\{0\}} \left|\frac{x \times v}{r}\right|^2 \frac{f(x,v)}{|v|}\; \mathrm dv^1 \mathrm dv^2 \mathrm dv^3. \label{mat13}
\end{equation}
Consider the quantities  $E$ and $L$, defined by
\begin{equation}
E=e^\mu|v| \quad\mathrm{and}\quad L = x^2v^2-(x \cdot v)^2.
\end{equation}
The quantity $E$ can be seen as particle energy and $L$ is the square of the angular momentum of the particles. Observe that they are conserved along the characteristics of the Vlasov equation (\ref{eq_vlasov}). Moreover if one takes $f$ to be a function $\Phi(E,L)$ of these quantities $E$ and $L$ the Vlasov equation is automatically satisfied. In this paper we choose 
\begin{equation} \label{ansf}
f(x,v) = \Phi(E,L) = \left[1-\frac{E}{E_0}\right]_+^k \left[L-L_0\right]_+^\ell,
\end{equation}
where $\ell \geq -\frac 1 2$, $k \geq 0$, $L_0 > 0$, and $E_0 > 0$. Furthermore $[x]_+ = x$ if $x>0$ and $[x]_+=0$ if $x \leq 0$. Note that in contrast to Newtonian gravity in the relativistic case not all solutions of the Vlasov equation can be written in this form \cite{s99}. \par
For technical reasons we perform a change of variables in the metric coefficient $\mu$. We introduce
\begin{equation} \label{def_y}
y(r) = \ln(E_0) - \mu(r) \qquad \Leftrightarrow \qquad e^\mu = E_0 e^{-y}.
\end{equation}
Note that the Einstein equations (\ref{eeq1}) and (\ref{eeq2}) imply
\begin{equation}\label{eq_yp}
y'(r)=-\frac{1}{1-\frac{2m(r)}{r}}\left[4\pi r p(r) +\frac{m(r)}{r^2}\right]
\end{equation}
where the Hawking mass $m(r)$ is defined to be
\begin{equation} \label{def_m}
m(r) = 4\pi \int_0^r s^2 \varrho(s) \mathrm ds.
\end{equation}
We return to the matter quantities given by (\ref{mat11}) -- (\ref{mat13}). If one inserts the ansatz (\ref{ansf}) for $f$ one obtains by a change of variables in the integrals the expressions
\begin{align}
\varrho(r) &= c_\ell r^{2\ell} \int_{\sqrt{\frac{L_0}{r^2}}}^{e^y} \left(1-\varepsilon e^{-y}\right)^k \varepsilon^2 \left[\varepsilon^2-\frac{L_0}{r^2}\right]^{\ell + \frac 1 2} \mathrm d\varepsilon, \label{mat21} \\
p(r) &= \frac{c_\ell}{2\ell + 3} r^\ell \int_{\sqrt{\frac{L_0}{r^2}}}^{e^y} \left(1-\varepsilon e^{-y}\right)^k \left[\varepsilon^2-\frac{L_0}{r^2}\right]^{\ell + \frac 3 2} \mathrm d\varepsilon \label{mat22}
\end{align}
where
\begin{equation}
c_\ell = 2\pi \int_0^1 \frac{s^\ell}{\sqrt{1-s}}\mathrm ds.
\end{equation}
To have good control of the support of the matter quantities we define for $r\in(0,\infty)$ the \emph{characteristic function}
\begin{equation} \label{defgamma}
\gamma(r) = y(r) -\frac 1 2 \ln\left[\frac{L_0}{r^2}\right].
\end{equation}
Note that the matter quantities vanish if and only if $\gamma(r) \leq 0$ as can be seen by the limits of the integration in (\ref{mat21}) -- (\ref{mat22}). Those limits are in turn a consequence of the cutoff energy $E_0$ and angular momentum $L_0$ in the ansatz function (\ref{ansf}). So $\gamma(r) \leq 0$ is already equivalent to $f(x,v)=0$. We define the radius
\begin{equation} \label{defr0}
R_0 = \sqrt{\frac{L_0}{e^{2y(0)}}}.
\end{equation}
This radius characterizes an inner vacuum region which can be seen as follows. If $L_0>0$ we have $\gamma(r) < 0$ for small $r$ since $y$ is approaching the finite initial value $y_0$ but the logarithm is unbounded. By equation (\ref{eq_yp}) we have $y(R) = y_0$ as long as there is vacuum for all $r\leq R$. So the characteristic function is given by $\gamma(r) = y_0 - \frac 1 2 \ln(L_0/r^2)$ and its first zero will be exactly at $R_0$.\par
In an analogous way as described in \cite{rein93} for the massive Einstein-Vlasov system one can derive the Tolman-Oppenheimer-Volkov equation
\begin{equation} \label{tov_equation}
p'=y'(\varrho+p)-\frac{2}{r}(p-p_T)
\end{equation}
from the expressions (\ref{mat11}) -- (\ref{mat13}) of the matter quantities. It will be used in an integrated form in the proof of Proposition (\ref{prop_m_r}). \par

\section{Main Result} \label{sec_theorem}

The existence of $C^1$-solutions, not necessarily of compact support, of the massless Einstein-Vlasov system follows in an analogous way to the case where massive particles are considered. To make this precise we define {\em quasi shell solutions} whose existence can be taken as starting point for the arguments presented in the following sections.

\begin{defi} \label{def_quasi_shell}
A quasi shell solution of the massless Einstein-Vlasov system is a triple of functions $\lambda,\mu \in C^1([0,\infty))$ and $f\in L^1\left(\mathbb R^6\right)$ such that the Einstein-Vlasov system (\ref{eeq1}) -- (\ref{eq_vlasov}) is fulfilled and the matter quantities (\ref{mat11}) and (\ref{mat12}) are differentiable functions. Furthermore there exists $0 < R_0 < \infty$ such that $f$ and thus also $\varrho$ and $p$ are supported on $(R_0,\infty)$.
\end{defi}

In \cite{rein93} the existence of spherically symmetric, static solutions of the Einstein-Vlasov system for a non vanishing rest mass is proved. The arguments and techniques presented in this work also yield the existence of these quasi shell solutions. It is of course important to note that they do not yield solutions with matter quantities of compact support. In virtue of \cite{rein93} we can thus state the following lemma.

\begin{lemma} \label{lem_quasi}
Suppose the solution $f$ of the Vlasov equation (\ref{eq_vlasov}) is given by an ansatz of the form (\ref{ansf}) where $L_0 > 0$. Then for every initial value $\mu_0 \in \mathbb R$ of the lapse function there exists a unique quasi shell solution of the massless Einstein-Vlasov system (\ref{eeq1}) -- (\ref{eq_vlasov}).
\end{lemma}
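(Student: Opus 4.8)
The plan is to convert the problem into an initial value problem for a system of ordinary differential equations and to invoke standard existence and uniqueness theory, following the scheme used in the massive case \cite{rein93}. Since the ansatz \eqref{ansf} is a function of the conserved quantities $E$ and $L$, the Vlasov equation \eqref{eq_vlasov} is satisfied automatically, so the only task is to solve the Einstein equations \eqref{eeq1}--\eqref{eeq2} consistently with the matter generated by $f$. Substituting \eqref{ansf} into \eqref{mat11}--\eqref{mat12} produces the explicit expressions \eqref{mat21}--\eqref{mat22}, in which $\varrho$ and $p$ appear as functions of the single unknown $y(r)$ and the independent variable $r$. Writing $\varrho=\varrho(r,y)$ and $p=p(r,y)$, the field equations collapse to the first order system
\begin{equation}
m'(r) = 4\pi r^2 \varrho(r,y(r)), \qquad y'(r) = -\frac{1}{1-\frac{2m(r)}{r}}\left[4\pi r\, p(r,y(r)) + \frac{m(r)}{r^2}\right],
\end{equation}
the second equation being \eqref{eq_yp}. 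The prescribed datum is $\mu_0=\mu(0)$, equivalently $y(0)=\ln(E_0)-\mu_0=:y_0$, together with $m(0)=0$, which encodes the regular-center condition \eqref{bound_center} via $e^{-2\lambda}=1-2m/r$.

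First I would exploit the inner vacuum region. Because $L_0>0$, the characteristic function \eqref{defgamma} satisfies $\gamma(r)\to-\infty$ as $r\to 0^+$, so by the remark following \eqref{defgamma} the matter quantities vanish identically near the center. There the system integrates trivially to $m\equiv 0$ and $y\equiv y_0$, and the first zero of $\gamma$ occurs exactly at $R_0=\sqrt{L_0}\,e^{-y_0}$ as in \eqref{defr0}. This furnishes regular Cauchy data $(m(R_0),y(R_0))=(0,y_0)$ at a point where $r>0$, thereby sidestepping the coordinate degeneracy of the field equations at the center.

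Next I would establish local existence and uniqueness by verifying that the right hand sides of the reduced system are locally Lipschitz in $(m,y)$ for $r>0$. On the set where $\gamma(r)>0$ the integrands in \eqref{mat21}--\eqref{mat22} are smooth, and differentiation under the integral sign shows that $\varrho$ and $p$ are $C^1$ in $(r,y)$. The delicate point is the behaviour as $\gamma\downarrow 0$, where the interval of integration in \eqref{mat21}--\eqref{mat22} shrinks to a point and the matter joins continuously onto the vacuum values; here the hypotheses $\ell\geq-\tfrac12$ and $k\geq 0$ guarantee that the integrands vanish at a sufficient rate at both endpoints, so that $\varrho$, $p$ and their first derivatives extend continuously across $\{\gamma=0\}$. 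With this regularity, the Picard--Lindel\"of theorem yields a unique $C^1$ solution on a maximal interval $[0,R_{\max})$.

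The main obstacle is to promote this to a solution on all of $[0,\infty)$, which amounts to excluding a finite-radius breakdown of the $y$-equation. The only possible source of blow-up is the factor $\bigl(1-\tfrac{2m(r)}{r}\bigr)^{-1}$, so I would show that $2m(r)/r$ stays bounded away from $1$ along the solution; this is precisely an a priori bound of Buchdahl type, sharpened to $2m(r)/r\leq 8/9$ in \cite{a08}, whose structural hypotheses are met by the present matter model and which holds pointwise wherever the solution is defined. Given such a bound, $m$ and $y$ remain finite on every compact interval, and the monotonicity of $m$ (nondecreasing, since $m'\geq 0$) together with $y'\leq 0$ from \eqref{eq_yp} precludes escape, so the solution continues to $[0,\infty)$. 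It then remains to check the defining properties of a quasi shell solution: that $\lambda,\mu\in C^1([0,\infty))$, that the matter quantities are differentiable, and that $f\in L^1(\mathbb{R}^6)$, the last following from the compact velocity support forced by $E\leq E_0$ in \eqref{ansf} together with the decay of the atmosphere. Uniqueness of the quasi shell solution for the given $\mu_0$ is then inherited directly from the uniqueness of the ODE solution.
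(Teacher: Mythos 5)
Your proposal is correct and follows essentially the same route as the paper, which establishes this lemma only by appeal to the construction of \cite{rein93}: reduction to a first order system for $(m,y)$, triviality of the solution on the inner vacuum region $[0,R_0]$ forced by $L_0>0$, local Lipschitz continuity of $\varrho(r,y)$ and $p(r,y)$ across $\{\gamma=0\}$ (guaranteed by $q=\ell+k+\tfrac{3}{2}\geq 1$), and global continuation via an a priori bound keeping $2m/r$ away from $1$. One small caveat: when $q=1$ the first derivatives of the matter quantities need not extend \emph{continuously} across $\{\gamma=0\}$ as you assert, but the local Lipschitz bound that does hold is all Picard--Lindel\"of requires, so the argument stands.
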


The following theorem is the main result of this work and establishes the existence of spherically symmetric solutions of the massless Einstein-Vlasov system with compactly supported matter quantities.

\begin{theorem} \label{new_main_theorem}
There exist static, spherically symmetric, asymptotically flat solutions to the massless Einstein-Vlasov system, with compactly supported matter quantities. These solutions have the property that 
\begin{equation}\label{bounds-2}
\frac{4}{5}<\sup_{r\in[0,\infty)}\frac{2m(r)}{r}<\frac{8}{9}.
\end{equation}
\end{theorem}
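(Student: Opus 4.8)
The plan is to start from a quasi shell solution provided by Lemma~\ref{lem_quasi}, to locate the first radius $R_1>R_0$ at which $\gamma(R_1)=0$ (with $\gamma'(R_1)<0$, so that the matter quantities vanish there), and to discard the ansatz \eqref{ansf} beyond $R_1$, replacing the solution for $r\geq R_1$ by the exterior Schwarzschild metric with mass parameter $m(R_1)$. Since $f\equiv 0$ solves the Vlasov equation trivially and yields vanishing matter, the resulting triple is again a solution of \eqref{eeq1}--\eqref{eq_vlasov}; and because $\varrho(R_1)=p(R_1)=0$, the Einstein equations force $\lambda$ and $\mu$ to match in a $C^1$ fashion across $R_1$. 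After fixing the free constant $E_0$ in \eqref{def_y} (equivalently the time normalization) so that $\mu\to 0$ at infinity, the glued solution is asymptotically flat with matter compactly supported in $[R_0,R_1]$. Thus the theorem reduces to two points: the existence of a finite truncation radius $R_1$, and the two-sided bound \eqref{bounds-2} on $\sup_r 2m(r)/r$.

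For the upper bound I would simply invoke \cite{a08}: the glued solution is a regular, asymptotically flat, static solution of the Einstein--Vlasov system, hence $2m(r)/r<8/9$ strictly. The substantive work concerns the closure and the lower bound. The mechanism for closure is a differential inequality for the characteristic function \eqref{defgamma}. Writing $G(r)=2m(r)/r$ and differentiating \eqref{defgamma}, equation \eqref{eq_yp} gives $\gamma'(r)=\tfrac1r+y'(r)=\tfrac1r-\tfrac{1}{1-G}\bigl(4\pi r p+\tfrac{G}{2r}\bigr)$. Dropping the nonnegative pressure term, on any interval where $G\geq\tfrac45$ one has $\tfrac{G}{2(1-G)}\geq 2$ and hence $\gamma'(r)\leq-\tfrac1r$. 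Integrating yields $\gamma(r)\leq\gamma(r_2)-\ln(r/r_2)$, so $\gamma$ must reach zero at a radius $R_1\leq r_2\,e^{\gamma(r_2)}<\infty$; this is precisely why a radius $r_2$ with $G(r_2)\geq\tfrac45$ is the crucial hypothesis. To realize the lower bound in \eqref{bounds-2} I would then produce, adapting the thin shell construction of \cite{a072} to $m_p=0$, solutions whose compactness $\sup_r G(r)$ is pushed above $\tfrac45$ (indeed arbitrarily close to $8/9$), using the scaling freedom of the system and a limiting analysis in the ansatz parameters $L_0$ and $y(0)$; strictness then follows since the construction keeps $\sup_r G(r)$ strictly between $\tfrac45$ and $8/9$.

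The main obstacle is to make the closure rigorous, that is, to guarantee that the compactness stays above the threshold $\tfrac45$ throughout the matter region $[r_2,R_1)$. The crude bound $G'(r)\geq -G/r$ only gives $G(r)\geq\tfrac45\,r_2/r$, which is far too weak to keep $G\geq\tfrac45$; one must instead exploit the matter structure. Here I would use that masslessness forces a trace free stress tensor, $\varrho=p+2p_T$, together with the Tolman--Oppenheimer--Volkov equation \eqref{tov_equation} in integrated form, to control the profiles of $\varrho$, $p$ and $p_T$, and thereby the growth of $m(r)$ through \eqref{def_m}. The delicate point is that the same estimates must simultaneously keep the shell thin, so that $R_1/R_0$ is close to $1$ and the compactness is high, and drive $\gamma$ to zero before $G$ can fall below $\tfrac45$. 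Controlling $R_1$ quantitatively against the inner radius $R_0$ of \eqref{defr0}, in the massless regime where the logarithmic term in \eqref{defgamma} behaves so differently from the massive case, is where I expect the analysis to be hardest.
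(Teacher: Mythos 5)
Your skeleton matches the paper's: start from a quasi shell solution (Lemma~\ref{lem_quasi}), show the matter vanishes at some finite $R_1$, glue in a Schwarzschild exterior with mass $m(R_1)$, and get the upper bound in \eqref{bounds-2} from \cite{a08}. Your closure mechanism is also sound as far as it goes, and is in fact more elementary than the paper's: from \eqref{eq_yp} one indeed gets $\gamma'\leq\frac1r\bigl(1-\frac{G}{2(1-G)}\bigr)\leq-\frac1r$ wherever $G=2m/r\geq\frac45$, whereas the paper instead introduces $x=m/(r\gamma)$ and derives the Riccati inequality $rx'\geq\frac43x^2-x$, whose finite-radius blow-up forces $\gamma\to0$. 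Both routes hinge on the same input. The problem is that you have not supplied that input, and you defer exactly the two steps that constitute the substance of the proof.

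First, the existence of a radius $r_2$ with $2m(r_2)/r_2\geq\frac45$ is not a hypothesis to be ``realized by a limiting analysis in the ansatz parameters''; it must be \emph{proved} for the quasi shell solutions with small $R_0$, and this is the paper's Proposition~\ref{prop_m_r}, its technical core. That proof requires the quantitative preparation of Lemmas~\ref{lem_matter}--\ref{lem_gammas} (two-sided polynomial bounds on $\varrho,p,z$ in terms of $\gamma$, the growth estimate $\gamma'\geq\frac1{2r}$ near $R_0$, and the localization $r_2-R_0\lesssim R_0^{(q+2)/(q+1)}$), then the identity $\frac{\mathrm d}{\mathrm dr}\bigl[e^{\lambda-y}(m+4\pi r^3p)\bigr]=4\pi r^2e^{\lambda-y}(\varrho+p+2p_T)$ together with the intermediate bound $r_2\leq4m(r_2)$, to conclude $2m(r_2)/r_2\geq\frac89-C\Gamma(R_0)$. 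Invoking the trace-free condition and the TOV equation ``to control the profiles'' names the right tools but is not an argument. Second, the obstacle you yourself identify --- keeping $G\geq\frac45$ on all of $[r_2,R_1)$ --- is genuine, and your proposal leaves it open. In the paper it is resolved not by finer structural estimates on the matter but by two quantitative facts you have not established: $\gamma(r_2)=\gamma_*\to0$ as $R_0\to0$, so your truncation radius $r_2e^{\gamma(r_2)}$ exceeds $r_2$ only by a relative margin $o(1)$; and $2m(r_2)/r_2$ is close to $\frac89$, not merely to $\frac45$, so the crude monotonicity bound $m(r)/r\geq(r_2/r)\,m(r_2)/r_2$ already keeps $G$ above the threshold over that short interval. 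Without these, your differential inequality is only known to hold at the single point $r_2$ and the closure does not close.
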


\begin{rem}
The upper bound in \eqref{bounds-2} results from the Buchdahl inequality \cite{a08}. The lower bound in \eqref{bounds-2} is investigated numerically in section \ref{sec_numerics} and appears to be a necessary criterion for existence.
\end{rem}
The major step in the proof of the main theorem is the following proposition, which has been established for the massive system in \cite{a072}. It is generalized here to the massless Einstein-Vlasov system and slightly improved in the sense that a larger class of ansatz functions is considered.

\begin{prop}\label{main_theorem}
Consider a quasi shell solution of the massless Einstein-Vlasov system with a sufficiently small inner radius of support $R_0$. The distribution function $f$ then vanishes within the interval
\begin{equation}
\left[R_0\left(1+B_0R_0^{\frac{2}{q+1}}\right),R_1\right], \quad \mathrm{\it where}\;R_1 = R_0 \frac{1+B_2 R_0^{\frac{1}{q+1}}}{1-B_1 R_0^{\frac{2}{q+1}}},
\end{equation}
and $B_0$, $B_1$, $B_2$, and $q$ are positive constants which only depend on the ansatz function for the matter distribution. The solution can be joined with a Schwarzschild solution at the point where $f$ vanishes and a static shell is obtained with support within $[R_0,R_1]$. The mass parameter of this Schwarzschild solution corresponds to the mass of the matter shell and the metric components $\mu$ and $\lambda$ as well as the matter quantities (\ref{mat11}) -- (\ref{mat13}) are continuously differentiable.
\end{prop}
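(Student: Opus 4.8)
The plan is to adapt the strategy of \cite{a072} to the massless case, reformulating everything in terms of the characteristic function $\gamma$ of (\ref{defgamma}) and integrating the pair $(\gamma,m)$ outward from $R_0$. Differentiating (\ref{defgamma}) and inserting (\ref{eq_yp}) gives
\[
\gamma'(r)=\frac{1}{r}-\frac{1}{1-\frac{2m(r)}{r}}\left[4\pi r\,p(r)+\frac{m(r)}{r^{2}}\right],
\]
coupled to $m'(r)=4\pi r^{2}\varrho(r)$ from (\ref{def_m}). By (\ref{defr0}) the region $[0,R_0]$ is vacuum, so $m(R_0)=0$, $p(R_0)=0$, whence $\gamma(R_0)=0$ and $\gamma'(R_0)=1/R_0>0$: matter switches on just outside $R_0$. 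The observation driving the whole argument is that as soon as $2m/r$ reaches $4/5$ one has $\frac{m/r^{2}}{1-2m/r}\geq 2/r$, so that $\gamma'<0$ irrespective of the nonnegative pressure term. The problem therefore reduces to showing that, for $R_0$ small, $2m/r$ does climb to $4/5$ within a thin layer and that $\gamma$ is thereby forced back to zero at a first return radius $R_1$ obeying the stated bounds.

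First I would derive sharp local expansions of the matter terms in powers of $\gamma$. Substituting $\varepsilon=(\sqrt{L_0}/r)\,\sigma$ in (\ref{mat21})--(\ref{mat22}) and using $e^{y}=e^{\gamma}\sqrt{L_0}/r$ converts the integrals into $\int_{1}^{e^{\gamma}}(1-\sigma e^{-\gamma})^{k}\sigma^{2}(\sigma^{2}-1)^{\ell+\frac12}\,d\sigma$ and its pressure analogue, which for small $\gamma$ equal fixed constants times $\gamma^{\,k+\ell+\frac32}$ and $\gamma^{\,k+\ell+\frac52}$. This yields two-sided bounds $\varrho\sim\gamma^{q}$ and $p\sim\gamma^{\,q+1}$ with $q=k+\ell+\tfrac32$ (the $r$-dependence being essentially constant across the thin shell), which identifies the exponent $q$ of the statement and shows $p/\varrho\to 0$ as $\gamma\to 0$. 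The integrated form of the Tolman--Oppenheimer--Volkov equation (\ref{tov_equation}) then bounds $p$ and $p_{T}$ against $\varrho$ uniformly over the layer.

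The central and hardest step is the quantitative shell analysis. The scaling above forces $\Delta r/R_0\sim\gamma_{\max}\sim R_0^{2/(q+1)}$, so I would introduce the rescaled variable $r=R_0\bigl(1+\xi R_0^{2/(q+1)}\bigr)$ and show that as $R_0\to 0$ the system for $(\gamma,m)$ converges to a limiting shell profile in $\xi$. Along this profile $\gamma$ first grows while $m/r$ accumulates, $2m/r$ is driven past $4/5$ at some $r_2$, and beyond $r_2$ the mechanism noted above gives $\gamma'<0$; one must then prove that $\gamma$ genuinely returns to zero rather than levelling off at a positive value. Converting the $\xi$-estimates back to $r$ produces the lower bound $R_0(1+B_0R_0^{2/(q+1)})$ on the thickness of the matter layer together with the upper bound formula for $R_1$, with $B_0,B_1,B_2$ read off from the limit profile. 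I expect the main obstacle to be exactly here: securing two-sided, $R_0$-uniform control of the return of $\gamma$ to zero while the factor $(1-2m/r)^{-1}$ is large near the Buchdahl bound, since this is what simultaneously makes the shell thin and guarantees that it closes.

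Finally, at the first return radius $R_1$ one has $\gamma(R_1)=0$, hence $\varrho(R_1)=p(R_1)=0$, and I would set $f\equiv 0$ for $r\geq R_1$; this is an admissible Vlasov field and the exterior metric is Schwarzschild with mass $M=m(R_1)$, the mass of the shell. Continuity of $m$ makes $e^{-2\lambda}=1-2m/r$ continuous across $R_1$, the additive constant in $\mu$ (equivalently $E_0$) is fixed by continuity of $\mu$ together with $\mu(\infty)=0$, and since $\varrho$ and $p$ vanish at $R_1$ the right-hand sides of (\ref{eeq1})--(\ref{eeq2}) are continuous there, so $\lambda'$ and $\mu'$ also match. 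Thus $\mu,\lambda$ and the matter quantities (\ref{mat11})--(\ref{mat13}) are $C^{1}$ and a static shell supported in $[R_0,R_1]$ with the asserted properties is obtained.
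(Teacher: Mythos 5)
Your setup is sound and matches the paper's framework: the two-sided bounds $\varrho\sim\gamma^{q}/r^{4}$, $p\sim\gamma^{q+1}/r^{4}$ with $q=k+\ell+\tfrac32$ are exactly Lemma \ref{lem_matter}, the sign analysis of $\gamma'$ near $R_0$ reproduces Lemma \ref{lem_gamma_p}, and the $C^1$ gluing at the first zero of $\gamma$ is the same as in the paper. But the two steps that carry the entire proof are missing, and you have flagged rather than closed them. First, you assert that ``$2m/r$ is driven past $4/5$ at some $r_2$'' without any mechanism. In the paper this is Proposition \ref{prop_m_r}, and it is the technical heart of the argument: one integrates the identity $\frac{\mathrm d}{\mathrm dr}\bigl[e^{\lambda-y}(m+4\pi r^3p)\bigr]=4\pi r^2e^{\lambda-y}(\varrho+p+2p_T)$ (a consequence of the TOV equation), combines it with the representation $4\pi r\varrho e^{\lambda}=-\frac{\mathrm d}{\mathrm dr}(e^{-\lambda})+e^{\lambda}m/r^2$, proves $r_2\le 4m(r_2)$ by a separate contradiction argument, and only after controlling $y(r_2)-y(R_0)$ logarithmically does the inequality $2m(r_2)/r_2\ge 8/9-C\Gamma(R_0)$ emerge. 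Nothing in your outline produces a lower bound on $m/r$; the bounds $\varrho\sim\gamma^q$ alone are compatible with $m/r$ staying small, in which case $\gamma'$ never turns negative and no shell forms.

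Second, even granting $2m(r_2)/r_2\ge 4/5$, your observation that $\gamma'<0$ there does not give a zero of $\gamma$ at a controlled radius: $\gamma$ could decrease and level off at a positive value, or reach zero only far outside $[R_0,2R_0]$ where Lemma \ref{lem_matter} no longer applies. The paper's device is the variable $x=m/(r\gamma)$, which satisfies the Riccati-type inequality $rx'\ge \tfrac43x^2-x$ on $[r_2,\tfrac{16}{15}r_2]$ once $m/r\ge 3/8$; since $x(r_2)\gtrsim R_0^{-2/(q+1)}\to\infty$, separation of variables forces $x$ to blow up (hence $\gamma$ to vanish) before $\tfrac{16}{15}r_2$, and this is precisely what yields the constants $B_1,B_2$ in $R_1$. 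Your proposed substitute --- a rescaling $r=R_0(1+\xi R_0^{2/(q+1)})$ and passage to a limit profile --- is not carried out, would itself require the uniform estimates it is meant to deliver, and uses the wrong scale: the outer margin of the shell is of relative size $R_0^{1/(q+1)}$ (from $B_2R_0^{1/(q+1)}$ in the statement), not $R_0^{2/(q+1)}$, so a single $\xi$-window of width $O(1)$ at your scale cannot reach $R_1$. As written, the proposal is a correct plan of attack with the two decisive estimates left open.
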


\begin{rem}
Note that these shell solutions can be made infinitely thin as they are moved towards the origin, i.e.~$\frac{R_1}{R_0}\to 1$ as $R_0\to 0$.
\end{rem}


\section{Proof of the main theorem} \label{proofs}

We verify that the corresponding result to \cite{a072} in the massive case holds similarly in the massless case. The result is slightly refined to cover a larger class of ansatz functions. For the sake of completeness we give the entire adapted proof in detail. Eventually we show that a gluing method yields solutions of the claimed regularity.

\subsection{Bounds on the matter quantities}
A precise control of the matter quantities will be crucial for the proof of the main theorem. First we derive explicit bounds for the matter quantities in terms of $\gamma$ and $r$.
\begin{lemma} \label{lem_matter}
Let $R_0$ be defined as in (\ref{defr0}) and $\gamma$ as in (\ref{defgamma}). Moreover, let $k\geq 0$ and $\ell \geq -\frac 12$. Then for $r\in[R_0,2R_0]$ there are positive constants $C_{l}$ and $C_u$ depending on $k$, $\ell$, and $L_0$ such that when $\gamma \geq 0$,
\begin{align}
C_l \frac{\gamma^{\ell + k + \frac 3 2}}{r^4} \leq&\, \varrho(r) \leq C_u \frac{\gamma^{\ell + k + \frac 3 2}}{r^4}, \label{bounds_rho} \\
C_l \frac{\gamma^{\ell + k + \frac 5 2}}{r^4} \leq&\, p(r) \leq C_u \frac{\gamma^{\ell + k + \frac 5 2}}{r^4}, \label{bounds_p} \\
C_l \frac{\gamma^{\ell + k + \frac 3 2}}{r^2} \leq&\, z(r) \leq C_u \frac{\gamma^{\ell + k + \frac 3 2}}{r^2}, \label{bounds_z}
\end{align}
where $z = \varrho - p -2p_T$ and $\varrho$, $p$, and $p_T$ are as in (\ref{mat11}) -- (\ref{mat13}).
\end{lemma}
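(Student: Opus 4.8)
The plan is to reduce all three estimates to a single computation on the integral representations \eqref{mat21}–\eqref{mat22}, exploiting that \emph{both} limits of integration are zeros of the respective integrands. Write $a := \sqrt{L_0/r^2} = \sqrt{L_0}/r$ for the lower limit and $b := e^{y(r)}$ for the upper limit, so that by the definition \eqref{defgamma} of the characteristic function one has $\gamma(r) = \ln(b/a)$, i.e.\ $b = a\,e^{\gamma(r)}$ and hence $b - a = a(e^{\gamma}-1)$. First I would establish the a priori bound $0 \le \gamma(r) \le \ln 2$ for $r \in [R_0, 2R_0]$: from \eqref{defr0} one checks that $\gamma(R_0) = 0$, while $\gamma(r) = y(r) + \ln(r/\sqrt{L_0})$ together with the monotonic decrease of $y$ from \eqref{eq_yp} gives $\gamma(r) \le \gamma(R_0) + \ln 2 = \ln 2$. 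This boundedness is the engine of the whole argument: it yields the two-sided comparison $\gamma \le e^{\gamma}-1 \le \gamma/\ln 2$, so that $b-a$ is comparable to $a\gamma$, and simultaneously $a \le b \le 2a$, so that $a$, $b$, and $\sqrt{L_0}/r$ are all comparable up to constants.

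The main step is the substitution $\varepsilon = a + (b-a)s$, $s \in [0,1]$, in \eqref{mat21}. Under it the two vanishing factors and the Jacobian become $(1-\varepsilon e^{-y})^k = [(b-a)(1-s)/b]^k$, $[\varepsilon^2 - a^2]^{\ell + 1/2} = [(b-a)\,s\,(2a + (b-a)s)]^{\ell + 1/2}$, and $\mathrm d\varepsilon = (b-a)\,\mathrm ds$, so that a total factor $(b-a)^{k+\ell+3/2}$ pulls out. In the residual integrand the quantities $a + (b-a)s$, $2a + (b-a)s$, and $b$ all lie between fixed multiples of $a$ uniformly in $s$ (using $\gamma \le \ln 2$); raising them to their respective exponents and using $\ell + \tfrac12 \ge 0$ one extracts them as powers of $a$ up to constants depending only on $k$ and $\ell$. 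What remains is the Beta integral $\int_0^1 (1-s)^k s^{\ell + 1/2}\,\mathrm ds = B(k+1, \ell + \tfrac32)$, finite and strictly positive precisely because $k \ge 0$ and $\ell \ge -\tfrac12$. This produces two-sided bounds on the $\varepsilon$-integral of the form $(\text{const})\cdot a^{2\ell + 4}\gamma^{k + \ell + 3/2}$.

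Collecting powers then closes the argument. With $a = \sqrt{L_0}/r$ the factor $a^{2\ell+4}$ equals $L_0^{\ell+2} r^{-2\ell - 4}$, and after multiplying by the explicit prefactor $c_\ell r^{2\ell}$ the powers $r^{\pm 2\ell}$ cancel, leaving exactly $\varrho \asymp L_0^{\ell+2}\,\gamma^{k+\ell+3/2} r^{-4}$, which is \eqref{bounds_rho}. The estimate \eqref{bounds_p} follows identically from \eqref{mat22}, the only difference being the higher exponent $\ell + \tfrac32$ on $[\varepsilon^2 - a^2]$, which contributes one extra power of $(b-a) \asymp a\gamma$ and hence raises the $\gamma$-exponent to $k + \ell + \tfrac52$, the powers of $r$ again combining to $r^{-4}$. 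For \eqref{bounds_z} I would first record the integral representation of $z = \varrho - p - 2p_T$ obtained from \eqref{mat11}–\eqref{mat13} by the same change of variables and run the substitution through it; the point is that the integrand for $z$ carries the weight $[\varepsilon^2 - a^2]^{\ell + 1/2}$ but \emph{lacks} the factor $\varepsilon^2$ present in the $\varrho$-integrand, which removes two powers of $a$ and thereby turns the $r^{-4}$ scaling into $r^{-2}$, with the same $\gamma$-exponent $k + \ell + \tfrac32$ as $\varrho$.

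The hard part is securing the residual-integral bounds \emph{uniformly} in $r$ and $\gamma$, i.e.\ with constants that neither degenerate as $\gamma \to 0$ nor depend on $r$ across $[R_0, 2R_0]$. This is exactly where the a priori bound $\gamma \le \ln 2$ is indispensable: without an upper bound on $\gamma$ the ratio $b/a = e^\gamma$ would be unbounded, the comparison $e^\gamma - 1 \asymp \gamma$ would break down, and factors such as $(2a + (b-a)s)^{\ell + 1/2}$ could no longer be replaced by powers of $a$ with $r$-independent constants. Restricting to $[R_0, 2R_0]$ keeps all of this under control, and the convergence of the Beta integral at both endpoints is then immediate from $k \ge 0$ and $\ell + \tfrac12 \ge 0$.
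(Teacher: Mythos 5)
For the bounds \eqref{bounds_rho} and \eqref{bounds_p} your argument is correct and is essentially the paper's own proof in different packaging: your affine substitution $\varepsilon=a+(b-a)s$ is exactly the paper's $t=\frac{\varepsilon/a-1}{b-1}$, your a priori bound $\gamma\leq\ln 2$ on $[R_0,2R_0]$ is obtained from the same monotonicity of $y$ (the paper phrases it as $\gamma'\leq 1/r$), and the comparison $\gamma\leq e^{\gamma}-1\leq C\gamma$ plays the same role. The only difference is cosmetic: you pull the factors $a+(b-a)s$, $2a+(b-a)s$, $b$ out of the residual integral as powers of $a$ (using $a\leq b\leq 2a$) and land on the Beta function $B(k+1,\ell+\tfrac32)$, whereas the paper keeps them inside an integral $I_{k,\ell}(e^{\gamma})$ and bounds that integral crudely from below by restricting to $t\in[\tfrac14,\tfrac12]$ and from above by $b^2(b+1)^{\ell+1/2}$. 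Both routes give the same constants up to relabelling.

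The bound \eqref{bounds_z} is another matter. For massless particles one has, directly from \eqref{mat11}--\eqref{mat13},
\begin{equation*}
p(r)+2p_T(r)=\int\frac{f}{|v|}\left[\frac{(x\cdot v)^2}{r^2}+|v|^2-\frac{(x\cdot v)^2}{r^2}\right]\mathrm dv=\int f\,|v|\,\mathrm dv=\varrho(r),
\end{equation*}
i.e.\ the stress--energy tensor is traceless and $z=\varrho-p-2p_T\equiv 0$. Consequently the integral representation you propose for $z$ --- a manifestly positive integral carrying the weight $[\varepsilon^2-a^2]^{\ell+1/2}$ but no $\varepsilon^2$ --- cannot be correct here: that formula is the massive-case expression, where the ``missing'' factor is really proportional to $m_p^2$ and degenerates to zero when $m_p=0$. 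The positive lower bound in \eqref{bounds_z} is therefore not provable (it is false whenever $\gamma>0$), and your derivation of it would fail at the step where you write down the representation of $z$. In fairness, this defect is already present in the paper: the lemma asserts the same two-sided bound and the proof merely gestures at ``similar bounds'' for $p_T$. It is also harmless downstream, since Proposition \ref{prop_m_r} uses only the upper bound on $z$, which $z\equiv 0$ satisfies trivially. You should either drop \eqref{bounds_z} or replace it by the identity $z=0$; everything else in your proposal stands.
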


\begin{proof}
We will only carry out the proof in detail for (\ref{bounds_rho}) since the other two bounds, (\ref{bounds_p}) and (\ref{bounds_z}), follow in a very similar way. By inserting the ansatz (\ref{ansf}) into the expression (\ref{mat11}) for $\varrho$ we obtain the form (\ref{mat21}). We consider the integral over $\varepsilon$ in more detail. First, we rewrite it in a convenient way. Observe that by definition of $\gamma$ given in (\ref{defgamma}) we have
\begin{equation}
e^{y} = e^{\gamma}\sqrt{\frac{L_0}{r^2}}.
\end{equation}
We substitute this and set in addition $a = \sqrt{L_0/r^2}$ and $b = e^\gamma$. We obtain
\begin{equation}
\varrho(r) = 2\pi c_\ell r^{2\ell} \int_a^{ba} \left[1-\frac{\varepsilon}{ba}\right]^k \varepsilon^2 \left(\varepsilon^2-a^2\right)^{\ell + \frac 1 2} \mathrm d\varepsilon.
\end{equation}
A further change of variables given by $t=\frac{\varepsilon/a-1}{b-1}$ in the integral yields
\begin{equation}
\varrho(r) = 2\pi c_\ell \frac{L_0^{\ell + 2}}{r^4} b^{-k}(b-1)^{k+\ell+\frac{3}{2}} \int_0^1 t^{\ell + \frac 1 2} (1-t)^k (t(b-1)+1)^2 (t(b-1)+2)^{\ell + \frac 1 2} \mathrm dt.
\end{equation}
We call this integral $I_{k,\ell}\left(e^\gamma\right)$. Note that $e^\gamma = b$ and $I_{k,\ell}\left(e^\gamma\right)$ is always positive since $b\geq 1$ in view of the assumption $\gamma \geq 0$. Moreover, the integral has bounds from below and from above given by
\begin{equation}
I_{k,\ell} \geq \int_{\frac{1}{4}}^{\frac{1}{2}} \left(\frac{1}{4}\right)^{\ell + \frac 1 2} \left(\frac 1 2\right)^k 2^{\ell + \frac 1 2} \mathrm dt = 2^{-(k+\ell + 3)}
\end{equation}
and
\begin{equation} \label{int_upper_b}
I_{k,\ell} \leq \int_0^1 b^2 (b+1)^{\ell + \frac 1 2}\mathrm dt = b^2(b+1)^{\ell + \frac 1 2}.
\end{equation}
The last quantity (\ref{int_upper_b}) will of course be bounded if $b$ is bounded. We will observe in the following that this is in fact the case for $r\in[R_0,2R_0]$. Note that these bounds do not depend on $R_0$ if this radius is considered as being determined by a choice of $y_0$. We have obtained that $\varrho$ is of the form
\begin{equation} \label{rho_pol}
\varrho(r) = C_{k,\ell}^\varrho \frac{L_0^{2\ell}}{r^4} e^{-k\gamma} \left(e^\gamma -1\right)^{\ell + k + \frac 3 2} I_{k,\ell}\left(e^\gamma\right)
\end{equation}
with a positive constant $C_{k,\ell}^\varrho$ only depending on $k$ and $\ell$. \par
In the remainder of the proof we make use of the assumption that $r\in[R_0,2R_0]$. We calculate the derivative of the function $\gamma(r)$ and obtain
\begin{equation} \label{gammas}
\gamma'(r) = y'(r)+\frac{1}{r} = -e^{2\lambda} \left[\frac{m(r)}{r^2} + 4\pi rp(r)\right] + \frac{1}{r}.
\end{equation}
Note that in particular $\gamma'(r) \leq \frac 1 r$. Since $\gamma(R_0)=0$ this implies that
\begin{equation}
\gamma(r) \leq \ln\left(r/R_0\right) \leq \ln(2) < 1
\end{equation}
 for $r\in [R_0,2R_0]$. As an immediate consequence it follows that the factor $e^{-k\gamma}$ is bounded from below and from above, namely we have $e^{-k} \leq e^{-k\gamma} \leq 1$. Finally, we want to estimate the expressions of the form $\left(e^\gamma -1\right)^{\ell + k + \frac 3 2}$ by powers of $\gamma$. In view of the Taylor expansion
\begin{equation}
e^\gamma = \sum_{j=0}^\infty \frac{\gamma^j}{j!} \leq 1 + \gamma + \frac 1 2 \sum_{j=2}^{\infty} \gamma^j = 1 + \gamma + \frac{\gamma^2}{2(1-\gamma)}
\end{equation}
we have 
\begin{equation}
\gamma(r) \leq \left[e^{\gamma(r)}-1\right] \leq C \gamma(r)
\end{equation}
for a positive constant $C$ since $\gamma(r) \leq \ln(2)$ on $[R_0,2R_0]$. For the other matter quantities similar bounds can be derived using the form (\ref{mat22}) for $p$ and a similar form for $p_T$. The corresponding expression for $p_T$ obtained by the same succession of variable transformations is less compact than (\ref{mat21}) and (\ref{mat22}) but the structure is similar. In \cite{rein93} this form is derived explicitly. Hence one can take $C_l$ as the minimum of these constants and $C_u$ as their maximum. 
\end{proof}

\subsection{Estimates on the characteristic function}

In the remainder of this work it will be useful to have the abbreviation
\begin{equation} \label{def_q}
q = \ell + k + \frac 3 2 \geq 1
\end{equation}
at hand. A direct consequence of Lemma \ref{lem_matter} is that on the interval $[R_0,2R_0]$ the characteristic function $\gamma(r)$ can be estimated by a power of $r$ as stated by the following lemma.

\begin{lemma} \label{lem_est_gamma}
Given $k \geq 0$ and $\ell \geq -\frac 1 2$ there is a constant $C_\gamma > 0$ such that
\begin{equation}
\gamma(r) \leq C_\gamma r^{\frac{2}{q+1}}
\end{equation}
for all $r\in[R_0,2R_0]$, if $R_0$ is chosen sufficiently small.
\end{lemma}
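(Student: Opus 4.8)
The plan is to use the lower bound for the energy density $\varrho$ from Lemma \ref{lem_matter} together with the definition of the Hawking mass to bound $m(r)$ from below by an integral of a power of $\gamma$, and then feed this back into the differential inequality for $\gamma'$ derived in the proof of Lemma \ref{lem_matter} (equation \eqref{gammas}). The key observation is that \eqref{gammas} gives $\gamma'(r) = -e^{2\lambda}\left[\frac{m(r)}{r^2} + 4\pi r p(r)\right] + \frac{1}{r}$, so the \emph{presence} of matter (a positive $m(r)$ and $p(r)$) pushes $\gamma'$ below $\frac1r$. Since $\gamma(R_0)=0$ and $\gamma$ grows from there, I want to show that the mass term alone forces $\gamma$ to grow no faster than $r^{2/(q+1)}$.

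First I would drop the positive pressure term and the factor $e^{2\lambda}\ge 1$ to obtain the clean differential inequality
\begin{equation}
\gamma'(r) \leq \frac{1}{r} - \frac{m(r)}{r^2}.
\end{equation}
Next I would estimate $m(r)$ from below using \eqref{bounds_rho}. Since $\varrho(s) \geq C_l\, \gamma(s)^{q}/s^4$ on $[R_0,2R_0]$ (recalling $q=\ell+k+\frac32$), the definition \eqref{def_m} gives $m(r) = 4\pi\int_0^r s^2\varrho(s)\,\mathrm ds \geq 4\pi C_l \int_{R_0}^r \frac{\gamma(s)^{q}}{s^2}\,\mathrm ds$. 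Because $r\in[R_0,2R_0]$, the weight $s^{-2}$ is comparable to $r^{-2}$ up to a factor of $4$, so this is bounded below by a constant times $r^{-2}\int_{R_0}^r \gamma(s)^{q}\,\mathrm ds$. Substituting back yields a closed differential inequality of the form
\begin{equation}
\gamma'(r) \leq \frac{1}{r} - \frac{c}{r^{4}}\int_{R_0}^r \gamma(s)^{q}\,\mathrm ds
\end{equation}
for a positive constant $c$, valid while $\gamma\ge 0$.

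The main obstacle, and the crux of the argument, is turning this integro-differential inequality into the stated power-law bound $\gamma(r)\le C_\gamma r^{2/(q+1)}$. I would proceed by a comparison/bootstrap argument: posit the ansatz $\bar\gamma(r)=C_\gamma r^{2/(q+1)}$ and check that it is a supersolution of the inequality, i.e. that substituting $\bar\gamma$ makes the right-hand side no larger than $\bar\gamma'(r)$, for a suitable choice of $C_\gamma$. The exponent $2/(q+1)$ is precisely the scaling that balances the linear growth term $1/r$ against the integral feedback term, since $\int_{R_0}^r s^{2q/(q+1)}\,\mathrm ds \sim r^{(q+2)/(q+1)}$, and dividing by $r^4$ and comparing with $\bar\gamma'\sim r^{2/(q+1)-1}=r^{(1-q)/(q+1)}$ shows the powers of $r$ match. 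The delicate points are (i) handling the lower limit $R_0$ in the integral, which is why the hypothesis that $R_0$ is chosen sufficiently small enters — one needs $R_0^{2/(q+1)}$ small enough that the contribution near $r=R_0$ does not dominate; and (ii) ensuring the initial condition $\gamma(R_0)=0 = \bar\gamma(R_0) - C_\gamma R_0^{2/(q+1)}$ is respected so that the comparison can be launched. With the supersolution verified, a standard maximum-principle/continuity argument confines $\gamma$ below $\bar\gamma$ on all of $[R_0,2R_0]$, which is the claim.
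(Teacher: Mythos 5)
There is a genuine gap. Your integro-differential inequality $\gamma'(r)\le \frac1r-\frac{c}{r^4}\int_{R_0}^r\gamma(s)^q\,\mathrm ds$ is correct, but the comparison argument you build on it does not close, for two reasons. First, the feedback term has the wrong monotonicity for a supersolution argument: at a putative first touching point $r^*$ where $\gamma(r^*)=\bar\gamma(r^*)$, the information $\gamma\le\bar\gamma$ on $[R_0,r^*]$ only gives an \emph{upper} bound on $\int_{R_0}^{r^*}\gamma^q$, hence an upper bound on the accumulated mass, whereas to force $\gamma'(r^*)<\bar\gamma'(r^*)$ you need a \emph{lower} bound on the mass; a $\gamma$ that stays near zero (accumulating almost no matter) and then rises at the maximal rate $1/r$ is not excluded by your comparison. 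Second, the supersolution inequality itself fails in a right neighbourhood of $R_0$: there $\int_{R_0}^r\bar\gamma^q\to 0$, so the right-hand side tends to $1/R_0$, which vastly exceeds $\bar\gamma'(R_0)\sim C_\gamma R_0^{2/(q+1)-1}$ for small $R_0$. The lower limit $R_0$ cannot be treated as a perturbation, since $r-R_0\le R_0$ on the whole interval $[R_0,2R_0]$; on that interval the integral is of order $(r-R_0)R_0^{2q/(q+1)}$, not $r^{(q+2)/(q+1)}$ (your scaling count is also off: the antiderivative of $s^{2q/(q+1)}$ scales like $r^{(3q+1)/(q+1)}$). The approach could conceivably be repaired by using $\gamma'\le 1/r$ to bound $\gamma$, and hence the accumulated mass, from \emph{below} on a left neighbourhood of the touching point, but that is an additional idea you have not supplied.

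The paper avoids all of this by using the \emph{pressure} rather than the mass: Lemma \ref{lem_matter} gives the pointwise bound $p(r)\ge C_l\gamma(r)^{q+1}/r^4$, whose exponent $q+1$ matches the barrier $\gamma^{q+1}=r^2/(4\pi C_l)$ exactly. At a first crossing $r_1$ of this barrier one has $4\pi r_1 p(r_1)\ge 1/r_1$, so in \eqref{gammas} the term $-e^{2\lambda}\,4\pi r_1 p(r_1)$ alone already cancels $+1/r_1$ (using $e^{2\lambda}\ge1$) and $\gamma'(r_1)\le 0$, contradicting the $\gamma'(r_1)>0$ needed to cross the barrier from below. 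Because the pressure is controlled pointwise by $\gamma$, no history of the solution enters and the first-crossing argument is immediate; note also that the balance $4\pi rp\sim\gamma^{q+1}/r^3$ against $1/r$ is what actually produces the exponent $2/(q+1)$.
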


\begin{proof}
We show that for $r\in[R_0,2R_0]$
\begin{equation} \label{est_gamma}
\gamma(r)^{q+1} \leq \frac{r^2}{4\pi C_l}
\end{equation}
by a contradiction argument. $C_l$ is the constant in Lemma \ref{lem_matter}. Recall that by its definition (\ref{defgamma}) we have $\gamma(R_0) = 0$, so for $r=R_0$ the inequality (\ref{est_gamma}) is trivially fulfilled. Assume that there is a subinterval $[r_1,r_2]\subset [R_0,2R_0]$ such that
\begin{equation*}
\gamma(r_1)^{q+1} = \frac{r_1^2}{4\pi C_l}, \quad \mathrm{and} \quad \gamma(r)^{q+1} > \frac{r^2}{4\pi C_l} \quad \mathrm{for}\;r\in(r_1,r_2).
\end{equation*}
Note that by virtue of Lemma \ref{lem_matter} we have 
\begin{equation}
p(r_1) \geq C_l\frac{\gamma(r_1)^{q+1}}{r_1^4} = \frac{1}{4\pi r_1^2}.
\end{equation}
Consider $\gamma'$ given in (\ref{gammas}) at $r=r_1$:

\begin{equation}
\gamma'(r_1) \leq -\left[\frac{m(r_1)}{r_1^2} + \frac{1}{r_1}\right] e^{2\lambda} + \frac{1}{r_1} \leq -\frac{m(r_1)}{r_1^2}e^{2\lambda} \leq 0.
\end{equation}
But on the other hand
\begin{equation}
\gamma(r)^{q+1} > \frac{r^2}{4\pi C_l}
\end{equation}
 requires $\gamma'(r_1) > 0$ which yields a contradiction. So on $[R_0,2R_0]$ we have 
\begin{equation}
\gamma(r)^{q+1} \leq \frac{r^2}{4\pi C_l} \quad \Leftrightarrow \quad \gamma(r)^{q+1} \leq (4\pi C_l)^{-\frac{1}{q+1}} r^{\frac{2}{q+1}} =: C_\gamma r^{\frac{2}{q+1}}
\end{equation}
and the proof is complete.
\end{proof}

We proceed with a lower bound on the derivative of the characteristic function.

\begin{lemma} \label{lem_gamma_p}
Let $C_m=\mathrm{max}\{1,C_u,4\pi C_u\}$ and let 
\begin{equation}
\delta \leq \frac{R_0^{\frac{q+3}{q+1}}}{C_m8^{\frac{1}{q+1}}}.
\end{equation}
Then
\begin{equation}
\gamma'(r) \geq \frac{1}{2r}
\end{equation}
for $r\in [R_0,R_0+\delta]$ if $R_0$ is chosen small enough.
\end{lemma}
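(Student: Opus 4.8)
The plan is to start from the identity for the derivative of the characteristic function, equation~(\ref{gammas}),
\begin{equation*}
\gamma'(r) = \frac{1}{r} - e^{2\lambda}\left[\frac{m(r)}{r^2} + 4\pi r p(r)\right],
\end{equation*}
and to reduce the claim $\gamma'(r)\geq \frac{1}{2r}$ to the single estimate
\begin{equation*}
e^{2\lambda}\left[\frac{m(r)}{r^2} + 4\pi r p(r)\right] \leq \frac{1}{2r}.
\end{equation*}
I would prove this by controlling the three factors $e^{2\lambda}$, $m(r)/r^2$, and $4\pi r p(r)$ separately, the key input being the smallness of $\gamma$ on the short interval $[R_0,R_0+\delta]$ enforced by the choice of $\delta$.

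The first step is to bound $\gamma$ itself. Since $\gamma(R_0)=0$ and $\gamma'(r)\leq 1/r$ (noted in the proof of Lemma~\ref{lem_matter}), integration gives $\gamma(r)\leq \ln(1+\delta/R_0)\leq \delta/R_0$ for $r\in[R_0,R_0+\delta]$. Substituting the hypothesis $\delta\leq R_0^{(q+3)/(q+1)}/(C_m 8^{1/(q+1)})$ yields $\gamma(r)\leq R_0^{2/(q+1)}/(C_m 8^{1/(q+1)})$, and hence the clean bound
\begin{equation*}
\gamma(r)^{q+1}\leq \frac{R_0^2}{8\, C_m^{q+1}}, \qquad r\in[R_0,R_0+\delta].
\end{equation*}
Here I would also record that, because the exponent $(q+3)/(q+1)>1$, the bound on $\delta$ forces $\delta\leq R_0$ once $R_0$ is small enough; this is precisely the role of the smallness assumption, and it guarantees $[R_0,R_0+\delta]\subseteq[R_0,2R_0]$ so that the bounds of Lemma~\ref{lem_matter} are available.

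With this in hand the pressure term is immediate: Lemma~\ref{lem_matter} gives $p(r)\leq C_u\gamma^{q+1}/r^4$, so using $r\geq R_0$ and $C_m^{q+1}\geq 4\pi C_u$ (which holds since $C_m\geq\max\{1,4\pi C_u\}$) I obtain $4\pi r^2 p(r)\leq 4\pi C_u\gamma^{q+1}/r^2 \leq 4\pi C_u/(8C_m^{q+1})\leq 1/8$, that is $4\pi r p(r)\leq 1/(8r)$. For the mass term I would integrate the density bound $\varrho(s)\leq C_u\gamma(s)^q/s^4$ from $R_0$ to $r$; since $\gamma(s)\leq\delta/R_0$ and $\int_{R_0}^r s^{-2}\,ds\leq \delta/R_0^2$, this produces $m(r)\leq 4\pi C_u\,\delta^{q+1}/R_0^{q+2}$, and inserting the defining bound $\delta^{q+1}\leq R_0^{q+3}/(8C_m^{q+1})$ gives $m(r)/r\leq 1/8$. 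Consequently $2m(r)/r\leq 1/4$, whence $e^{2\lambda}=(1-2m/r)^{-1}\leq 2$.

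Combining the three estimates,
\begin{equation*}
e^{2\lambda}\left[\frac{m(r)}{r^2}+4\pi r p(r)\right]\leq 2\left(\frac{1}{8r}+\frac{1}{8r}\right)=\frac{1}{2r},
\end{equation*}
so that $\gamma'(r)\geq \frac{1}{r}-\frac{1}{2r}=\frac{1}{2r}$, as claimed. I expect the main obstacle to be the mass estimate rather than the pressure estimate: it is the integration of $\varrho$ over the thin shell that converts the pointwise matter bound into a power of $\delta$, and the exponent $(q+3)/(q+1)$ in the hypothesis on $\delta$ is tuned exactly so that $\delta^{q+1}/R_0^{q+3}$ stays bounded by $1/(8C_m^{q+1})$. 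Getting the bookkeeping of the $R_0$-powers to line up and confirming that the \emph{same} choice of $\delta$ simultaneously controls both the pressure and the mass terms is the delicate part of the argument.
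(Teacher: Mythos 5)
Your proposal is correct and follows essentially the same route as the paper's proof: bound $\gamma\leq\delta/R_0$ via $\gamma'\leq 1/r$, feed this into the matter bounds of Lemma \ref{lem_matter}, integrate $\varrho$ over the thin shell to control $m/r$, and use the resulting smallness of $2m/r$ to bound $e^{2\lambda}$. The only cosmetic differences are your slightly weaker but sufficient bound $e^{2\lambda}\leq 2$ (the paper uses $4/3$) and your more careful tracking of the $4\pi$ factor in the mass integral, which the paper absorbs into $C_m$.
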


\begin{proof}
First we observe that for $R_0$ small enough we have $\delta < 2R_0$ which enables us to use Lemma \ref{lem_matter}. We have $\gamma'(r) = -y'(r) + \frac{1}{r}$. Let $\sigma \in [0,\delta]$. Since $\gamma(R_0)=0$, the mean value theorem implies that 
\begin{equation}
\gamma(R_0+\sigma) = \gamma(R_0+\sigma) - \gamma(R_0) \leq \sigma\gamma'(\xi) \leq \frac{\delta}{R_0}
\end{equation}
 for a $\xi \in [R_0,R_0+\sigma]$. This implies $\gamma(r) \leq \frac{\delta}{R_0}$. We substitute this in the upper bound (\ref{bounds_rho}) for $\varrho$ stated in Lemma \ref{lem_matter} and obtain
\begin{equation}
\varrho \leq C_u \frac{\delta^q}{R_0^q r^4}.
\end{equation}
Since $\varrho = 0$ when $r < R_0$ we obtain for $\sigma\in[0,\delta]$
\begin{align}
&m(R_0+\sigma) \leq \int_{R_0}^{R_0+\sigma} s^2 \frac{C_u \delta^q}{R_0^q s^4} \mathrm ds = \frac{C_u^\varrho \delta^q }{R_0^q}\frac{\sigma}{R_0(R_0+\sigma)}\nonumber \\
\Rightarrow \quad  &\frac{m(R_0+\sigma)}{R_0+\sigma} \leq \frac{C_u \delta^{q}}{R_0^{q+1}} \frac{\sigma}{(R_0+\sigma)^2} \leq \frac{C_u \delta^{q+1}}{R_0^{q+3}}.
\end{align}
We also have for the pressure $p$ the upper bound given in (\ref{bounds_p}) and therefore the estimate
\begin{equation}
p = C_u \frac{\gamma^{q+1}}{r^4} \leq \frac{C_u\delta^{q+1}}{r^4 R_0^{q+1}}, \quad \mathrm{for}\; r\in[R_0,R_0+\delta].
\end{equation}
Since by assumption $\delta \leq \frac{R_0^{\frac{q+3}{q+1}}}{C_m8^{\frac{q}{q+1}}}$ and  $C_m \geq C_u$ we have 
\begin{equation}
\delta^{q+1} \leq \frac{R_0^{q+3}}{8C_m^{q+1}}\leq \frac{R_0^{q+3}}{8C_m}
\end{equation}
and thus 
\begin{equation}
\frac{m(R_0+\sigma)}{R_0+\sigma} \leq \frac{C_u}{8C_m} \leq  \frac 1 8.
\end{equation}
Hence
\begin{equation}
e^{2\lambda(R_0+\sigma)} = \frac{1}{1-\frac{2m(R_0+\sigma)}{R_0+\sigma}} \leq \frac{1}{1-\frac{1}{4}} = \frac{4}{3}
\end{equation}
and also
\begin{equation}
4\pi r^2 p \leq \frac{4\pi}{r^2} \frac{C_u \delta^{q+1}}{R_0^{q+1}} \leq \frac{4\pi}{r^2} \frac{C_u R_0^{q+3}}{8 R_0^{q+1} C_m^{q+1}} \leq \frac{R_0^2}{8r^2} \leq \frac{1}{8}.
\end{equation}
Thus for $r\in[R_0,R_0+\delta]$ we have obtained
\begin{equation}
ry'(r) = -e^{2\lambda(r)} \left[4\pi r^2 p(r) + \frac{m(r)}{r}\right] \geq -\frac{1}{3}
\end{equation}
and so we have
\begin{equation}
\gamma'(r) = y'(r) + \frac{1}{r} = \frac{1}{r}\left(y'(r)+1\right) \geq \frac{1}{r} \frac 2 3 \geq \frac{1}{2r}
\end{equation}
as desired.
\end{proof}

\subsection{Existence of characteristic radii}
The previous lemma reveals that the quantity $\gamma$ increases relatively steeply after the inner vacuum region. In the following we show that this quantity will also decrease in a controlled way and eventually reach zero. To formulate this, some constants are defined. By virtue of Lemma \ref{lem_gamma_p} we have for $\sigma\in[0,\delta]$ the estimate
\begin{align*}
\gamma(R_0+\sigma) \geq \gamma(R_0) + \sigma \inf_{s\in[0,\sigma]} \gamma'(R_0+s) \geq 
\frac{\sigma}{2(R_0+\sigma)}.
\end{align*}
Now let
\begin{equation}
\sigma_*:=C_0R_0^{1+\frac{2q+1}{q(q+1)}},
\end{equation}
where $C_0 = \min\left\{ \frac 1 4, \left(C_m 8^{\frac{1}{q+1}}\right)^{-1}\right\}$. Since
\begin{equation}
1 + \frac{2q+1}{q(q+1)} \geq \frac{q+3}{q+1} 
\end{equation}
we have by construction
\begin{equation} \label{def_sigmas}
\sigma_* = C_0R_0^{1+\frac{2q+1}{q(q+1)}} \leq \frac{1}{C_m 8^{\frac{1}{1+q}}} R_0^{1 + \frac{2q+1}{q(q+1)}} \leq \frac{R_0^{\frac{q+3}{q+1}}}{C_m 8^{\frac{1}{1+q}}}
\end{equation}
since $R_0$ can be chosen small. Since $\sigma_* \sim R_0^{1+\frac{2q+1}{q(q+1)}}$ and $\delta \lesssim R_0^{\frac{q+3}{q+1}}$ one has $\sigma_* < \delta$ for $R_0$ small enough. We define
\begin{equation}
\gamma_* := \frac{\sigma_*}{2(R_0+\sigma_*)}.
\end{equation}
One easily checks that 
\begin{equation}
\gamma(R_0+\sigma_*) \geq (R_0 + \sigma_*) \frac{1}{2(R_0+\delta)} \geq \gamma_*.
\end{equation}
 Since $\gamma$ started at $\gamma(R_0) = 0$, it has reached the value $\gamma_*$ already at least once. We call this radius $r_1$. The following lemma states that this value $\gamma_*$ will be attained a second time shortly after the radius $R_0+\sigma_*$.

\begin{lemma} \label{lem_gammas}
Let $\kappa = \frac{9 2^{2q}}{C_l C_0^q}$ and let $K = \max\{C_0,\kappa\}$, and consider a solution with $R_0$ such that
\begin{equation} \label{r0_small}
R_0^{\frac{1}{1+q}}K \leq 1.
\end{equation}
Then there is a point $r_2$ such that $\gamma(r) \searrow \gamma_*$ as $r \nearrow r_2$, and $$r_2 \leq R_0 + \sigma_* + \kappa R_0^{\frac{q+2}{q+1}}.$$
\end{lemma}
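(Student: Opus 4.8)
The plan is to argue by contradiction, exploiting the fact that if $\gamma$ were to stay above the level $\gamma_*$ for too long, the matter accumulating in the region where $\gamma\geq\gamma_*$ would force the ratio $2m/r$ above the Buchdahl bound $8/9$ of \cite{a08}. Set $r_2^{\max}:=R_0+\sigma_*+\kappa R_0^{\frac{q+2}{q+1}}$ and, using the smallness assumption \eqref{r0_small} (which gives $\kappa R_0^{\frac{1}{q+1}}\leq K R_0^{\frac{1}{q+1}}\leq 1$, hence $\kappa R_0^{\frac{q+2}{q+1}}\leq R_0$, and likewise controls $\sigma_*$), I would first record that $r_2^{\max}\leq 2R_0$ for $R_0$ small, so that the bounds of Lemma~\ref{lem_matter} are available throughout $[R_0,r_2^{\max}]$. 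Recall also that on $[R_0,R_0+\delta]\supset[R_0,R_0+\sigma_*]$ Lemma~\ref{lem_gamma_p} yields $\gamma'>0$, so $\gamma$ is strictly increasing there and $\gamma(R_0+\delta)>\gamma(R_0+\sigma_*)\geq\gamma_*$; thus $\gamma$ genuinely exceeds $\gamma_*$ just beyond $R_0+\sigma_*$, and I define $r_2$ to be the first radius larger than $R_0+\sigma_*$ at which $\gamma$ returns to the value $\gamma_*$.

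Suppose, for contradiction, that no such return occurs in $(R_0+\sigma_*,r_2^{\max}]$. By continuity and $\gamma(R_0+\sigma_*)\geq\gamma_*$ this forces $\gamma(r)\geq\gamma_*$ on the whole interval $[R_0+\sigma_*,r_2^{\max}]$. There the lower bound \eqref{bounds_rho} gives $\varrho(r)\geq C_l\,\gamma_*^{\,q}/r^4$, and integrating the mass \eqref{def_m} over the sub-interval of length $\Delta:=\kappa R_0^{\frac{q+2}{q+1}}$ yields
\begin{equation*}
m(r_2^{\max})\geq 4\pi C_l\,\gamma_*^{\,q}\int_{R_0+\sigma_*}^{r_2^{\max}}\frac{\mathrm ds}{s^2}
=4\pi C_l\,\gamma_*^{\,q}\,\frac{\Delta}{(R_0+\sigma_*)\,r_2^{\max}}\geq \pi C_l\,\kappa\,\gamma_*^{\,q}\,R_0^{-\frac{q}{q+1}},
\end{equation*}
where in the last step I used $R_0+\sigma_*,\,r_2^{\max}\leq 2R_0$ together with the exponent identity $\frac{q+2}{q+1}-2=-\frac{q}{q+1}$.

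It then remains to insert a lower bound for $\gamma_*$. Since $R_0+\sigma_*\leq 2R_0$, the definition of $\gamma_*$ gives $\gamma_*\geq\frac{\sigma_*}{4R_0}=\frac{C_0}{4}R_0^{\frac{2q+1}{q(q+1)}}$, whence $\gamma_*^{\,q}R_0^{-\frac{q}{q+1}}\geq(C_0/4)^q R_0$ because the exponents combine as $\frac{2q+1}{q+1}-\frac{q}{q+1}=1$. Substituting this and the value $\kappa=9\cdot 2^{2q}/(C_l C_0^{\,q})$ produces
\begin{equation*}
\frac{2m(r_2^{\max})}{r_2^{\max}}\geq \pi C_l\,\kappa\,(C_0/4)^q
=\pi\,\frac{9\cdot 2^{2q}}{4^{\,q}}=9\pi,
\end{equation*}
which grossly violates $2m/r<8/9$. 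This contradiction shows that $r_2$ exists with $r_2\leq r_2^{\max}$; since $r_2$ is the first crossing of $\gamma_*$ from above, $\gamma$ is decreasing there, i.e. $\gamma(r)\searrow\gamma_*$ as $r\nearrow r_2$, which is the assertion.

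I expect the only delicate point to be the exponent bookkeeping: one must verify that the factor $\gamma_*^{\,q}$ (of order $R_0^{(2q+1)/(q+1)}$) exactly cancels the singular factor $R_0^{-q/(q+1)}$ coming from $\int\mathrm ds/s^2$, leaving a clean $O(1)$ lower bound for $2m/r$, and that the constant $\kappa$ has been tuned large enough that this bound overshoots $8/9$ (indeed it overshoots by a wide margin, reaching $9\pi$). The role of hypothesis \eqref{r0_small} is precisely to keep $r_2^{\max}\leq 2R_0$ so that Lemma~\ref{lem_matter} stays applicable on the whole integration range, while every power of $R_0$ is arranged so that the smallness of $R_0$ is never exploited in the wrong direction.
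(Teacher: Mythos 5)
Your proof is correct and follows essentially the same route as the paper's: a contradiction argument in which the lower bound on $\varrho$ from Lemma \ref{lem_matter}, integrated over the interval where $\gamma\geq\gamma_*$, forces $m/r$ past the Buchdahl bound if $\gamma$ has not returned to $\gamma_*$ by $R_0+\sigma_*+\kappa R_0^{\frac{q+2}{q+1}}$. The only cosmetic difference is that the paper bounds a general excursion length $\Delta$ and lands at $m/r\geq\tfrac12>\tfrac49$, whereas you integrate directly up to $r_2^{\max}$ and overshoot to $9\pi$; the exponent bookkeeping and the role of $\kappa$ are identical.
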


\begin{proof}
Since $\gamma(R_0+\sigma_*)\geq \gamma_*$, and since Lemma \ref{lem_gamma_p} gives that $\gamma'(R_0+\sigma_*) \geq \frac{1}{2(R_0+\sigma_*)} > 0$, the radius $r_2$ must be strictly larger than $R_0+\sigma_*$. Let $\Delta > 0$ such that $\gamma \geq \gamma_*$ on the interval $[R_0+\sigma_*, R_0+\sigma_*+\Delta]$ and also $[R_0+\sigma_*, R_0+\sigma_*+\Delta] \subset [R_0,2R_0]$. We will show that
\begin{equation}
\Delta \leq \kappa R_0^{\frac{q+2}{q+1}}
\end{equation}
which implies the assertion. Since Lemma \ref{lem_matter} provides on $[R_0,2R_0]$ the estimate $\varrho \geq C_l \frac{\gamma^q}{r^4}$ we have for $r\in [R_0+\sigma_*+\Delta,2R_0]$
\begin{equation}
\begin{aligned}
m(r) &= 4\pi C_l \int_{R_0}^r \frac{\gamma^q(s)}{s^2} \mathrm ds\\
& \geq 4\pi C_l \int_{R_0}^{R_0+\sigma_*} \frac{\gamma^q(s)}{s^2}\mathrm ds + 4\pi C_l \int_{R_0+\sigma_*}^{R_0+\sigma_*+\Delta} \frac{\gamma^q(s)}{s^2} \mathrm ds  \\
&\geq 4\pi C_l \int_{R_0+\sigma_*}^{R_0+\sigma_*+\Delta} \frac{\left(\gamma^*\right)^q}{s^2}\mathrm ds\\
& \geq C_l \int_{R_0+\sigma_*}^{R_0+\sigma_*+\Delta} \frac{\sigma_*^q}{2^q(R_0+\sigma_*)^q s^2}\mathrm ds  \\
&= C_l \frac{\sigma_*^q}{2^q(R_0+\sigma_*)^{q+1}}\frac{\Delta}{R_0+\sigma_*+\Delta}.
\end{aligned}
\end{equation}
Hence
\begin{equation} \label{ineq_mrsd}
\frac{m(R_0+\sigma_*+\Delta)}{R_0+\sigma_*+\Delta} \geq C_l \frac{\sigma_*^q}{2^q(R_0+\sigma_*)^{q+1}} \frac{\Delta}{(R_0+\sigma_*+\Delta)^2}.
\end{equation}
The assumption (\ref{r0_small}) guarantees $\sigma_* \leq R_0$. Inserting this into the upper inequality (\ref{ineq_mrsd}) yields
\begin{equation}\begin{aligned}
\frac{m(R_0+\sigma_*+\Delta)}{R_0+\sigma_*+\Delta} &\geq C_l \frac{C_0^q R_0^{q\left(1+\frac{2q+1}{q(q+1)}\right)}}{2^q 2^{q+1} R_0^{q+1}} \frac{\Delta}{(R_0+\sigma_*+\Delta)^2}\\
 &=\frac{C_l C_0^q  R_0^{\frac{q}{q+1}}\Delta}{2^{2q+1}(R_0+\sigma_*+\Delta)^2}.
\end{aligned}
\end{equation}
Assume now that the assertion of the lemma does not hold, i.e.~in particular $\Delta \geq \kappa R_0^{\frac{q+2}{q+1}}$. On the radial axis we of course have to remain in the domain $R_0 + \sigma_* + \Delta < 2R_0$ but (\ref{r0_small}) guarantees that this is always possible. It follows (also by $\sigma_* \leq R_0$)
\begin{equation}
\begin{aligned}
\frac{m(R_0 + \sigma_* + \Delta)}{R_0 + \sigma_* + \Delta}  &= C_l C_0^q \frac{R_0^{\frac{q}{q+1}}\kappa R_0 R_0^{\frac{1}{1+q}}}{2^{2q+1}(R_0 + \sigma_* + \Delta)^2}\\
& \geq C_l C_0^q \kappa \frac{R_0^2}{2^{2q+1}(3R_0)^2}\\
& = \frac{C_l C_0^q \kappa}{2^{2q+1}3^2}.
\end{aligned}
\end{equation}
And the definition of $\kappa$ implies that
\begin{equation}
\frac{m(R_0+\sigma_*+\Delta)}{R_0 + \sigma_* + \Delta} \geq \frac{C_l C_0^{q}\kappa}{3^2 2^{2q+1}} = \frac 1 2.
\end{equation}
But this leads to a contradiction since it is known that all static solutions fulfill $\frac{m(r)}{r} \leq \frac{4}{9} < \frac 1 2$ \cite{a08}. So we deduce $$\Delta < \kappa R_0^{\frac{q+2}{q+1}}$$ and the proof is complete.
\end{proof}

\subsection{A lower bound for $m/r$}
By now we have characterized the behavior of the quantity $\gamma(r)$ in some detail. Piling on those lemmas we can prove the following proposition giving a lower bound on the quantity $\frac{m}{r}$ at the radius $r_2$ where $\gamma(r_2)=\gamma_*$ a second time.

\begin{prop} \label{prop_m_r}
Let $r_2$ be as in Lemma \ref{lem_gammas} for a sufficiently small $R_0$. Then the corresponding solution satisfies
\begin{equation}
\frac{m(r_2)}{r_2} \geq \frac{2}{5}.
\end{equation}
\end{prop}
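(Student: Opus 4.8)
The plan is to combine a pointwise inequality at $r_2$, which alone yields only the weaker bound $m(r_2)/r_2\gtrsim 1/3$, with a global balance across the shell $[r_1,r_2]$ that lifts the constant to $2/5$. Throughout set $w(r)=m(r)/r$ and $P(r)=4\pi r^2p(r)$. Since $\gamma$ reaches $\gamma_*$ from above at $r_2$ we have $\gamma'(r_2)\le 0$; inserting \eqref{gammas} and $e^{2\lambda}=(1-2m/r)^{-1}$ and multiplying by $r_2$ gives
\begin{equation}
3\,\frac{m(r_2)}{r_2}+4\pi r_2^2p(r_2)\ge 1.
\end{equation}
By Lemma \ref{lem_matter}, $p(r_2)\le C_u\gamma_*^{\,q+1}/r_2^4$, and as $\gamma_*\approx\sigma_*/(2R_0)$ is a positive power of $R_0$ the pressure term tends to $0$ with $R_0$, so $w(r_2)>1/3-o(1)$. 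I would also record that $w$ is increasing on $[r_1,r_2]$: from \eqref{def_m} one has $w'=(4\pi r^2\varrho-w)/r$, and on $[r_1,r_2]$ the lower bound of Lemma \ref{lem_matter} gives $4\pi r^2\varrho\ge 4\pi C_l\gamma_*^{\,q}/r^2\to\infty$, which dominates $w\le 4/9$.

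To gain the factor $2/5$ I would argue by contradiction, assuming $m(r_2)/r_2\le 2/5$. By monotonicity this forces $w\le 2/5$ on all of $[r_1,r_2]$, hence $1-2w\ge 1/5$ stays bounded away from zero. The driving identity, obtained by combining \eqref{gammas} with $m'=4\pi r^2\varrho$, is
\begin{equation}
\frac{1-3w-P}{1-2w}\,w'=\Big(4\pi r^2\varrho-w\Big)\gamma',
\end{equation}
which I would integrate over $[r_1,r_2]$. Since $w$ is monotone the left-hand side equals $G(w_2)-G(w_1)$ modulo the pressure remainder, where
\begin{equation}
G(w)=\tfrac32 w+\tfrac14\ln(1-2w),\qquad G'(w)=\frac{1-3w}{1-2w},
\end{equation}
and $w_i=m(r_i)/r_i$. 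For the right-hand side I would use that by \eqref{rho_pol} the product $4\pi r^2\varrho$ has the form $\Xi(\gamma)/r^2$ with $\Xi$ depending only on $\gamma$; splitting at the maximum of $\gamma$ and using $\gamma(r_1)=\gamma(r_2)=\gamma_*$, the density contribution collapses to an integral weighted by $r^{-2}-\tilde r^{-2}$, where $r<\tilde r$ are the two radii sharing a common value of $\gamma$, and this is $o(1)$ because the shell is thin. The integrated form of the TOV equation \eqref{tov_equation} is what I would invoke to control the remaining pressure terms. Separately, using the upper bound of Lemma \ref{lem_matter}, $\gamma\le\gamma_*$ on $[R_0,r_1]$ and $r_1-R_0\le\sigma_*$, I would show $w_1\to 0$ as $R_0\to0$, so that $G(w_1)\to G(0)=0$.

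Putting these together gives $G(w_2)\to 0$. On the other hand $w_2>1/3$, and $G$ is strictly decreasing on $(1/3,1/2)$ by the sign of $G'$; hence the standing assumption $w_2\le 2/5$ would force
\begin{equation}
G(w_2)\ge G\!\left(\tfrac25\right)=\tfrac35-\tfrac14\ln 5>0,
\end{equation}
contradicting $G(w_2)\to 0$. Therefore $m(r_2)/r_2>2/5$ once $R_0$ is small enough.

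The hard part will be the reduction in the second paragraph: one must bound the pressure remainder $\int_{w_1}^{w_2}\tfrac{P}{1-2w}\,dw$ and the geometric term $\int_{r_1}^{r_2}(4\pi r^2\varrho-w)\gamma'\,dr$ uniformly in $R_0$. The naive estimate $\gamma\le\gamma_{\max}$ from Lemma \ref{lem_est_gamma} is far too lossy here, since $\gamma$ remains close to $\gamma_*$ on most of $[r_1,r_2]$; the a priori bound $w\le 2/5$ from the contradiction hypothesis (which keeps $1-2w$ away from $0$), together with the thinness estimates of Lemma \ref{lem_gammas} and the integrated TOV equation, are precisely what should make these remainders vanish as $R_0\to 0$.
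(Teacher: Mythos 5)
Your setup is largely sound: the pointwise inequality $3w(r_2)+P(r_2)\ge 1$ obtained from $\gamma'(r_2)\le 0$ via \eqref{gammas}, the identity $\frac{1-3w-P}{1-2w}\,w'=(4\pi r^2\varrho-w)\gamma'$ (both sides equal $rw'\gamma'$), the monotonicity of $w$ on $[r_1,r_2]$, the smallness of $w_1$, and the $o(1)$ estimates for the two geometric terms $\int 4\pi r^2\varrho\,\gamma'\,dr$ and $\int w\,\gamma'\,dr$ (using $4\pi r^2\varrho=\Xi(\gamma)r^{-2}$ from \eqref{rho_pol} and the thinness of the shell) all check out. The fatal gap is the claim that the pressure remainder $\int_{r_1}^{r_2}\frac{P}{1-2w}\,w'\,dr$ vanishes as $R_0\to 0$. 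It does not, and the integrated TOV equation will not make it vanish. In the bulk of the shell the only available bound is $P=4\pi r^2p\le 4\pi C_u\gamma^{q+1}/r^2\le C_u/C_l$ (Lemma \ref{lem_matter} combined with Lemma \ref{lem_est_gamma}); unlike at $r_2$, where $\gamma=\gamma_*$ is anomalously small, in the interior $\gamma$ may be comparable to $C_\gamma r^{2/(q+1)}$ and $P$ is then of order one. Worse, your own identity shows the integral \emph{cannot} be $o(1)$: since the other terms are $o(1)$ and $G(w_1)\to 0$, it yields $\int\frac{P}{1-2w}\,dw=G(w_2)+o(1)$, and the actual behavior of these solutions (established in the paper) is $2w_2\to 8/9$, so the pressure integral converges to $G(4/9)=\tfrac{2}{3}-\tfrac{1}{4}\ln 9\approx 0.117>0$. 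Had it really been $o(1)$, the relation $G(w_2)\to 0$ with $w_2>1/3$ would force $w_2\to w^*\approx 0.470$, violating the Buchdahl bound $w\le 4/9$ --- a sanity check showing the claim is false rather than merely hard to prove.

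The paper avoids this obstruction by a different algebraic organization in which the bulk radial pressure is never estimated. It differentiates $e^{\lambda-y}\bigl(m+4\pi r^3p\bigr)$ and uses the TOV equation \eqref{tov_equation} to obtain the source $4\pi r^2e^{\lambda-y}(\varrho+p+2p_T)$; for massless matter $p+2p_T$ essentially equals $\varrho$ (tracelessness --- in the paper's notation $z=\varrho-p-2p_T$ is controlled by $r^2\varrho$), so the bulk pressure is traded for density, and pressure survives only in the boundary term $4\pi r_2^3p(r_2)$, which is small precisely because $\gamma(r_2)=\gamma_*$. The density integral is then bounded below by $1-\sqrt{1-2w_2}$ using the first Einstein equation, and together with $r_2\le 4m(r_2)$ and $y(r_2)-y(R_0)\to 0$ this gives $1\ge 3\sqrt{1-2w_2}-o(1)$, i.e.\ $2w_2\ge 8/9-o(1)$, from which $w_2\ge 2/5$ follows. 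To salvage your route you would need the quantitative bound $\int\frac{P}{1-2w}\,dw<G(2/5)=\tfrac{3}{5}-\tfrac{1}{4}\ln 5$, which is essentially as hard as the proposition itself; the tracelessness identity is the missing idea.
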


\begin{proof}
In the first part of the prove we investigate the expression $m(r_2)e^{(\lambda-y)(r_2)}$. First note that the Einstein equations (\ref{eeq1}) and (\ref{eeq2}) imply
\begin{equation}
\lambda' - y' = 4\pi r e^{2\lambda}(\varrho + p).
\end{equation}
Using this and the TOV equation (\ref{tov_equation}) we can calculate the $r$-derivative of the expression $e^{\lambda-y}\left(m(r)+4\pi r^3 p(r)\right)$ and obtain
\begin{equation}
\frac{\mathrm d}{\mathrm dr} e^{\lambda-y}\left(m(r)+4\pi r^3 p(r)\right) = 4\pi r^2 e^{\lambda-y} (\varrho + p +2p_T).
\end{equation}
So we have
\begin{equation}
e^{\lambda-y}\left[\frac{m}{r^2} + 4\pi r p\right] = \frac{1}{r^2} \int_0^r 4\pi s^2 e^{(\lambda-y)(s)}(\varrho + p + 2p_T)\mathrm ds.
\end{equation}
We consider this equality at $r=r_2$. Furthermore we substitute $z = \varrho - p -2p_T$ and use the fact that all matter quantities are zero for $r \leq R_0$. We obtain
\begin{equation} \label{mr2_exp2}
m(r_2) e^{(\lambda-y)(r_2)} = \int_{R_0}^{r_2} 4\pi s^2 e^{(\lambda-y)(s)}(2\varrho - z)\mathrm ds - e^{(\lambda-y)(r_2)}4\pi r_2^3 p(r_2).
\end{equation}
Lemma \ref{lem_matter} implies 
\begin{equation}z \leq C_u \frac{\gamma^{\ell+k+\frac 3 2}}{r^2} \leq \frac{C_u}{C_l} r^2 C_l \frac{\gamma^{\ell+k+\frac 3 2}}{r^4} \leq\frac{C_u}{C_l} r^2 \varrho(r).
\end{equation} 
Inserting this relation into (\ref{mr2_exp2}) yields
\begin{equation} \label{mr2_exp3}
\int_{R_0}^{r_2} 4\pi s^2 e^{(\lambda-y)(s)} z(s)\;\mathrm ds \leq 4\pi C r_2^3 e^{-y(r_2)} \int_{R_0}^{r_2} s e^{\lambda(s)} \varrho(s) \mathrm ds
\end{equation}
for a positive constant $C$. The $y$-factor can be taken in front of the integral since $y$ is monotonically decreasing. In the remainder of the proof $C$ shall be understood as a symbolic notation for a positive constant whose value may change from line to line. \par
We estimate the integral in (\ref{mr2_exp3}) further. Observe that the first Einstein equation (\ref{eeq1}) implies $4\pi r\varrho e^\lambda = -\frac{\mathrm d}{\mathrm dr}\left(e^{-\lambda}\right) + e^\lambda \frac{m}{r^2}$. We write
\begin{align}
\int_{R_0}^{r_2} 4\pi s \varrho(s) e^{\lambda(s)} \mathrm ds &= \int_{R_0}^{r_2}\left[-\frac{\mathrm d}{\mathrm ds} e^{-\lambda}\right]\mathrm ds + \int_{R_0}^{r_2} \frac{m(s) e^{\lambda(s)}}{s^2}\mathrm ds \nonumber \\
&\leq 1-\sqrt{1-\frac{2m(r_2)}{r_2}} + C m(r_2) R_0^{\frac{q+2}{q+1}-2} \nonumber \\
&\leq \frac{2m(r_2)}{R_0} + \frac{C m(r_2)}{R_0^{\frac{q}{q+1}}} \leq C\frac{2m(r_2)}{R_0}. \label{upper_b_int_r_l}
\end{align}
Here it was used that $e^{-2\lambda} = 1-\frac{2m}{r}$, $m(r)$ is increasing in $r$, and that
\begin{equation}
r_2 \leq R_0 + \sigma_* + \kappa R_0^{\frac{q+2}{q+1}} \Rightarrow r_2-R_0 \leq C R_0^{\frac{q+2}{q+1}}
\end{equation}
for a positive constant $C$ (cf.~also the definitions (\ref{def_sigmas}) and (\ref{def_q}) and $\kappa$ is defined in Lemma \ref{lem_gammas}). \par
The next step of the proof is to show that $r_2 \leq 4m(r_2)$. Therefore we first observe that $\gamma'(r_2) \leq 0$ since at $r_2$, $\gamma(r)$ approaches $\gamma_*$ from above as stated in Lemma \ref{lem_gammas} and also
\begin{equation} \label{est_p_s}
r_2 p(r_2) \leq C_u \frac{\gamma_*^{q+1}}{r_2^3} \leq \frac{C}{R_0^{\frac{q-1}{q}}}
\end{equation}
by Lemma \ref{lem_matter}. By the definition (\ref{defgamma}) of $\gamma$ we have $y'(r_2) = \gamma'(r_2) - \frac{1}{r_2} \leq -\frac{1}{r_2}$. Now assume that $4m(r_2) < r_2$. Then $e^{2\lambda} \leq 2$ and we get
\begin{equation}
y'(r_2) = -e^{2\lambda}\left[\frac{m(r_2)}{r_2^2} + 4\pi r_2 p(r_2) \right] \geq -2\left[\frac{1}{4 r_2} + \frac{4\pi C}{R_0^{\frac{q-1}{q}}}\right].
\end{equation}
This expression is certainly larger than $-r_2^{-1}$ if $R_0$ is chosen small enough, since a small $R_0$ also implies a small $r_2$ by its definition in Lemma \ref{lem_gammas}. This is a contradiction to the previous simple observation. Hence $r_2 \leq 4m(r_2)$. \par
Coming back to the expression (\ref{mr2_exp2}) we apply the estimates (\ref{mr2_exp3}) and (\ref{upper_b_int_r_l}) and use $r_2 \leq 4m(r_2)$. We have
\begin{equation}
\begin{aligned}
&m(r_2) e^{(\lambda-y)(r_2)}\\
 &\geq \int_{R_0}^{r_2} 4\pi s^2 e^{(\lambda-y)(s)} 2\varrho(s)\mathrm ds - C r_2^3 e^{-y(r_2)}\int_{R_0}^{r_2} 4\pi s \varrho(s) e^\lambda \mathrm ds - 4\pi r_2^3 p e^{(\lambda-y)(r_2)} \\
& \geq 2\int_{R_0}^{r_2} 4\pi s^2 e^{(\lambda-y)(s)} \varrho(s) \mathrm ds - C e^{-y(r_2)} r_2^2 m(r_2) - C r_2^2 m(r_2) p e^{(\lambda-y)(r_2)}  \\
& \geq 2 e^{-y(R_0)} R_0 \int_{R_0}^{r_2} 4\pi s e^{\lambda(s)} \varrho(s)\mathrm ds - C R_0^2 m(r_2)\left(e^{-y(r_2)} + p(r_2) e^{(\lambda - y)(r_2)}\right) \label{mr2_exp4}.
\end{aligned}
\end{equation}
We have already observed that the first Einstein equation (\ref{eeq1}) implies $4\pi \varrho e^\lambda = -\frac{\mathrm d}{\mathrm dr}(e^{-\lambda}) + e^\lambda \frac{m}{r^2}$. We estimate
\begin{align}
\int_{R_0}^{r_2} 4\pi s \varrho(s) e^{\lambda(s)} \mathrm ds &= \int_{R_0}^{r_2} \left[-\frac{\mathrm d}{\mathrm ds} e^{-\lambda}\right] \mathrm ds + \int_{R_0}^{r_2} \frac{m(s) e^{\lambda(s)}}{s^2}\mathrm ds \nonumber \\
&\geq 1- \sqrt{1-\frac{2m(r_2)}{r_2}} = \frac{2m(r_2)}{r_2\left[1+\sqrt{1-\frac{2m(r_2)}{r_2}}\right]}.
\end{align}
Substituting this in (\ref{mr2_exp4}) we obtain
\begin{equation}
m(r_2) e^{(\lambda - y)(r_2)} \geq e^{-y(R_0)} \frac{R_0}{r_2} \frac{4m(r_2)}{1+\sqrt{1-\frac{2m(r_2)}{r_2}}} -C  R_0^2 m(r_2) \left(e^{-y(r_2)} + p(r_2) e^{(\lambda - y)(r_2)}\right).
\end{equation}
Again we use the estimate (\ref{est_p_s}) for $p(r_2)$ and the facts that $e^\lambda \geq 1$ and that $y(r)$ decreases in $r$. We get
\begin{align}
m(r_2) e^{(\lambda - y)(r_2)} &\geq e^{-y(R_0)} \frac{R_0}{r_2} \frac{4m(r_2)}{1+\sqrt{1-\frac{2m(r_2)}{r_2}}}-C e^{-y(r_2)} R_0^2 m(r_2) - CR_0^{\frac 1 q} m(r_2) e^{(\lambda - y)(r_2)} \nonumber \\
\Rightarrow \quad e^{y(R_0)-y(r_2)} &\geq \frac{R_0}{r_2} \frac{4 e^{-\lambda(r_2)}}{1+\sqrt{1-\frac{2m(r_2)}{r_2}}}-C e^{y(R_0)-y(r_2)} R_0^2 - C R_0^{\frac 1 q} e^{y(R_0)-y(r_2)} \nonumber \\
\Rightarrow \quad 1 &\geq e^{y(r_2)-y(R_0)} \frac{R_0}{r_2} \frac{4 e^{-\lambda(r_2)}}{1+\sqrt{1-\frac{2m(r_2)}{r_2}}} - CR_0^2 - CR_0^{\frac 1 q}. \label{mr2_exp5}
\end{align}
Furthermore we have
\begin{equation} \label{int_to_0}
y(r_2)-y(R_0) = - \int_{R_0}^{r_2} e^{2\lambda(s)} \left[4\pi s p(s) + \frac{m(s)}{s^2}\right] \mathrm ds.
\end{equation} \par
We show that this integral (\ref{int_to_0}) goes to zero as $R_0$ goes to zero. From the estimate in Lemma \ref{lem_est_gamma} it follows that $p(r) \leq \frac{C}{r^2}$ for a positive constant $C$. Moreover it is known that for every static solution of the Einstein Vlasov system we have the inequality $\frac{2m(r)}{r} \leq \frac{8}{9}$, \cite{a08}. This implies that $e^{2\lambda} = \left[1-\frac{2m(r)}{r}\right]^{-1}$ is bounded from above by 9. We get for the integrand of \eqref{int_to_0}
\begin{equation}
\left[\frac{m}{r^2} + 4\pi r p\right] e^{2\lambda} \leq 9\left[\frac{4}{9r} + \frac{4\pi C}{r}\right] =: \frac{C_p}{r}.
\end{equation}
Hence
\begin{equation}
y(r_2)-y(R_0) \geq -\int_{R_0}^{r_2} \frac{C_p}{s} \mathrm ds = [-C_p \ln(s)]_{R_0}^{r_2} = -C_p\ln\left[\frac{r_2}{R_0}\right].\end{equation}
This implies that (\ref{mr2_exp5}) can be written as
\begin{equation}
1 \geq \left[\frac{R_0}{r_2}\right]^{C_p + 1} \frac{4\sqrt{1-\frac{2m(r_2)}{r_2}}}{1+\sqrt{1-\frac{2m(r_2)}{r_2}}} - CR_0^2 - CR_0^{\frac{1}{q}}.
\end{equation}
Since $\frac{R_0}{r_2} \nearrow 1$, as $R_0 \to 0$ we can write this inequality as
\begin{equation}
1 \geq (1-\Gamma(R_0))\frac{4\sqrt{1-\frac{2m(r_2)}{r_2}}}{1+\sqrt{1-\frac{2m(r_2)}{r_2}}} - C\Gamma(R_0)
\end{equation}
with a function $\Gamma(R_0)$ fulfilling $0 < \Gamma(R_0) \to 0$, $R_0\to 0$. This yields
\begin{equation}
1+\sqrt{1-\frac{2m(r_2)}{r_2}} \geq (1-\Gamma(R_0)) 4 \sqrt{1-\frac{2m(r_2)}{r_2}} - C\Gamma(R_0),
\end{equation}
so that
\begin{equation}
1 \geq 3\sqrt{1-\frac{2m(r_2)}{r_2}} - C\Gamma(R_0).
\end{equation}
We square this inequality and obtain
\begin{align}
1 + 2C\Gamma(R_0) + C^2\Gamma^2(R_0) &\geq 9\left[1-\frac{2m(r_2)}{r_2}\right] \nonumber \\
\Rightarrow \quad \frac{2m(r_2)}{r_2} \geq \frac{8}{9} - C\Gamma(R_0).
\end{align}
By choosing $R_0$ small one can let the quotient $\frac{m(r_2)}{r_2}$ become arbitrarily close to $\frac 4 9$, in particular one can attain $\frac{m(r_2)}{r_2} \geq \frac{2}{5}$ which completes the proof of the proposition.
\end{proof}

\subsection{Proof of Proposition \ref{main_theorem}}

\begin{proof}[Proof of Proposition \ref{main_theorem}]
The matter quantities are different from zero if and only if $\gamma(r) > 0$. Since $\gamma(R_0) = 0$ and on $[R_0,R_0+\delta]$ we have $\gamma'(r) \geq \frac{1}{2r} > 0$. By virtue of Lemma \ref{lem_gammas} we can observe that $f$ cannot vanish for
\begin{equation}
r \leq R_0 + \frac{R_0^{\frac{q+3}{q+1}}}{C_m 8^{\frac{1}{1+q}}}.
\end{equation}
Thus the claim that the matter quantities do not vanish before $r=R_0 + B_0R_0^{\frac{q+3}{q+1}}$ follows with $B_0 = \frac{1}{C_m 8^{\frac{1}{q+1}}}$. \par
The main issue in this proof is to show that the matter quantities vanish before some radius
\begin{equation}
R_1 = \frac{R_0+B_2R_0^{\frac{q+2}{q+1}}}{1-B_1R_0^{\frac{2}{q+1}}}
\end{equation}
where $B_1$ and $B_2$ are positive constants. We will show that $\gamma$ necessarily vanishes close to the point $r_2$ if $R_0$ is sufficiently small. For this purpose the variable
\begin{equation} \label{def_x}
x = \frac{m(r)}{r \gamma(r)}
\end{equation}
is defined. Observe that $x \to \infty$ implies $\gamma \to 0$ and therefore also $f \to 0$. So we want to prove that $x$ diverges at a finite radius. We consider
\begin{equation} \label{an_est_x}
rx' = \frac{4\pi r^2\varrho}{\gamma} - x + \frac{x^2}{1-2\gamma x} - \frac{x}{\gamma} + \frac{4\pi r^3 p x^2}{1-2\gamma x}.
\end{equation}
for $r \in [R_0,r_2]$ and $\gamma(r) > 0$ and we will show that $\gamma(r)  = 0$ for some $r < (1+\Gamma(R_0))r_2$, where $\Gamma$ is some function having the property that $0 < \Gamma(R_0) \rightarrow 0$, as $R_0 \to 0$. Since $\gamma > 0$ and $\varrho, p \geq 0$ the first term and the last term in (\ref{an_est_x}) can be dropped and we have
\begin{equation} \label{rxp2}
rx' \geq \frac{x^2}{3(1-2\gamma x)} -x + \frac{2 x^2}{3(1-2\gamma x)} - \frac{x}{\gamma}.
\end{equation}
Take $R_0$ sufficiently small so that $\frac{m(r_2)}{r_2} \geq \frac{2}{5}$ by Proposition \ref{prop_m_r}. Let $r\in\left[r_2, \frac{16 r_2}{15}\right]$, then since $m$ is increasing in $r$ we get
\begin{equation} \label{est_m_r}
\frac{m(r)}{r} \geq \frac{m(r_2)}{r} = \frac{r_2}{r} \frac{m(r_2)}{r_2} \geq \frac{3}{8}.
\end{equation}
We which to estimate $rx' $ in (\ref{rxp2}) further. Observe that by (\ref{est_m_r}) and (\ref{def_x}) we have
\begin{equation}
\frac{x}{1-2\gamma x} = e^{2\lambda} \frac{m}{r\gamma} \geq \frac{3}{2\gamma}.
\end{equation}
Thus on $\left[r_2,\frac{16r_2}{15}\right]$ we have
\begin{equation}
\frac{2x^2}{3(1-\gamma x)} - \frac{x}{\gamma} \geq 0.
\end{equation}
Inserting this into (\ref{rxp2}) we obtain on this interval
\begin{equation} \label{rxp3}
rx' \geq \frac{x^2}{3(1-2\gamma x)} -x \geq \frac 4 3 x^2 -x.
\end{equation}
By virtue of Lemma \ref{lem_est_gamma} we have for $r\in[R_0,2R_0]$ the estimate $\gamma \leq  C_\gamma r^{\frac{2}{q+1}}$. So we get
\begin{equation} \label{est_x_r2}
x(r_2) = \frac{m(r_2)}{r_2} \frac{1}{\gamma(r_2)} \geq \frac{2}{5} \frac{C}{r_2^{\frac{2}{q+1}}}.
\end{equation}
Thus $x(r_2) \to \infty$ as $R_0 \to 0$, and we take $R_0$ sufficiently small so that 
\begin{equation}
\frac{x(r_2)}{x(r_2)-\frac 3 4} \leq \frac{16}{15}.
\end{equation}
 In particular we have $x(r_2) \geq 12 > \frac 3 4$. \par
We solve the differential inequality (\ref{rxp3}) with the method of separation of variables and obtain
\begin{equation}
x(r) \geq \frac 3 4 \frac{1}{1-\frac{r\left(\frac{3}{4} x(r_2)-1\right)}{\frac 4 3 r_2 x(r_2)}}.
\end{equation}
We observe that the right hand side diverges if $r$ approaches 
\begin{equation}
\frac{\frac 4 3 r_2 x(r_2)}{\frac 4 3 x(r_2) -1} \leq r_2\frac{16}{15}.
\end{equation} Note that this upper bound is already smaller than $2R_0$ for $R_0$ small. The estimate (\ref{est_x_r2}) yields
\begin{equation}
\frac{x(r_2)}{x(r_2)-\frac 3 4} \leq \frac{\frac 2 5 \frac{C}{r_2^{2/(q+1)}}}{\frac 2 5  \frac{C}{r_2^{2/(q+1)}} - \frac{3}{4}} = \frac{1}{1-\frac{15}{8}\frac{r_2^{2/(q+1)}}{C}}.
\end{equation}
Lemma \ref{lem_gammas} provides 
\begin{equation}
r_2 \leq R_0 + \sigma_*+\kappa R_0^{\frac{q+2}{q+1}}.
\end{equation} 
Hence it is possible to choose a constant $B_1$ such that
\begin{equation}
\frac{x(r_2)}{x(r_2)-\frac 3 4} \leq \frac{1}{1-B_1R_0^{\frac{2}{q+1}}} \to 1, \quad \mathrm{as}\;R_0\to 0.
\end{equation}
Finally, since $\sigma_* \leq \kappa R_0^{\frac{q+2}{q+1}}$, if $R_0$ is sufficiently small, it is clear that there is a positive constant $B_2$ such that 
\begin{equation}
r_2 \leq R_0 + B_2R_0^{\frac{q+2}{q+1}}.
\end{equation} 
In total
\begin{equation}
R_1 \leq \frac{R_0+B_2 R_0^{\frac{q+2}{q+1}}}{1-B_1R_0^{\frac{2}{q+1}}},
\end{equation}
as desired. \par
\end{proof}

\subsection{Proof of the main theorem}

With those lemmas and propositions at hand we can prove the main theorem.
\begin{proof}[Proof of the main theorem]
Suppose $y$, $\varrho$, and $p$ belong to a quasi shell solution with sufficiently small inner radius $R_0$. Proposition \ref{main_theorem} yields the existence of a radius, say $r_1 \leq R_1$, such that $\varrho(r_1)=p(r_1)=0$.  At this point the matter distribution function $f$ can be continued by constant zero and the metric component $y = \ln(E_0)-\mu$ can be continued with the Schwarzschild component
\begin{equation} \label{yss}
y_S(r) = \ln(E_0) -\frac 1 2 \ln\left[1-\frac{2M}{r}\right],
\end{equation}
where the mass parameter $M$ is chosen to be
\begin{equation}
M = m(R_1) = 4\pi \int_{R_0}^{R_1} s^2\varrho(s) \mathrm ds.
\end{equation}
A few words on the regularity of the functions $y$, $\varrho$, and $p$ are in order. By the same reasoning as in Lemma 2.1 in \cite{rr92_static} one obtains that $\varrho$ can be differentiated $\lfloor \ell + \frac 3 2\rfloor$ times with respect to $r$ and the derivative will be continuous. For $p$ one finds that it is $\lfloor\ell + \frac 5 2\rfloor$ times continuously differentiable. The derivative of $y$ is given by (\ref{eq_yp}). One observes that the derivative $y^{(n)}$ is given by an expression containing the functions $p,\dots,p^{(n-1)}$, $m$, and, $\varrho,\dots,\varrho^{(n-2)}$. So by the upper observations one concludes that $y$ can be continuously differentiated $\lfloor\ell + \frac 7 2\rfloor$ times.\par
All existing derivatives of the matter quantities contain integrals over the same domain as in (\ref{mat21}) and (\ref{mat22}). Thus they vanish if and only if $\gamma(r) \leq 0$. So since $\gamma(r_1)=0$ all existing derivatives of the matter quantities $\varrho$ and $p$ are zero and all existing derivatives of $y$ equal the corresponding derivatives of the Schwarzschild component $y_S$. The function $y$ gives rise to a full solution to the Einstein-Vlasov system as for example discussed in \cite{rr92} for the massive case. This ends the proof of the main theorem.
\end{proof}

\section{Criteria for shell formation} \label{sec_numerics}

In this section we present numerical support for Conjecture \ref{our_con}. To this end we construct spherically symmetric solutions of the Einstein-Vlasov system numerically. A typical energy density containing a shell of massless matter can be seen in Figure \ref{shell1}. The peak is enclosed by two vacuum regions. Note that the energy density is not zero for large radii even though it is very small. However, not all solutions show matter regions that are separated by vacuum, as for example the solution depicted in Figure \ref{shell2}.\par
\begin{figure}[h]
\begin{minipage}{0.49\textwidth}
\begin{center}
\includegraphics[height=4.5cm]{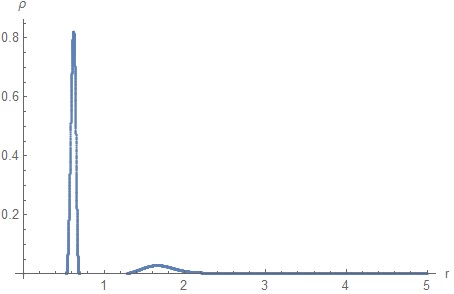}
\caption{A typical shell configuration. $k = 0$, $\ell = \frac 1 2$, $L_0 = 0.8$, $y_0=0.51$  \label{shell1}}
\end{center}
\end{minipage}
\hfill
\begin{minipage}{0.49\textwidth}
\begin{center}
\includegraphics[height=4.5cm]{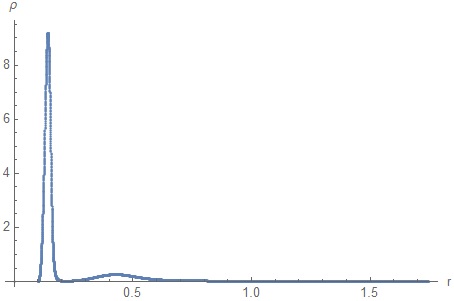}
\caption{The matter peaks are not separated by vacuum. $k = 0$, $\ell = \frac 3 2$, $L_0 = 0.2$, $y_0=0.22$ \label{shell2}}
\end{center}
\end{minipage}
\end{figure}
In Conjecture \ref{our_con} the quantity
\begin{equation}
\Gamma = \sup_{r\in[0,\infty)} \frac{2m(r)}{r}
\end{equation}
was introduced. 
Our numerical calculations indicate that $\Gamma \geq 0.8$ is a necessary condition for that a solution with separated matter regions as in Figure \ref{shell1} occurs. There are four parameters that determine a solution of the Einstein-Vlasov system if the matter distribution function is chosen to be of the form (\ref{ansf}). These parameters are the initial value $y_0$ for the metric component and the constants $k$, $\ell$, and $L_0$ in the ansatz (\ref{ansf}). We observe that this parameter space is separated by a hypersurface into a region where matter shells occur and a region where the support of the matter quantities is connected. Figure \ref{table} shows a part of the plane in this space that is spanned by $\ell$ and $y_0$. Shells only occur if $\Gamma \geq 0.8$. \par
\begin{figure}[h]
\begin{center}
\begin{tabular}{r||ccccccc}
\diagbox{$y_0$:}{$\ell$:} & -0.3 & -0.2 & -0.1 & 0 & 0.1 & 0.2 & 0.3 \\
\hline \hline
\multirow{2}{*}{0.43} & $\bullet$ & $\bullet$ & $\bullet$ & $\bullet$ & $\times$ & $\times$ & $\times$\\
& \scriptsize{0.86} & \scriptsize{0.85} & \scriptsize{0.83} & \scriptsize{0.82} & \scriptsize{0.80} & \scriptsize{0.78} & \scriptsize{0.77}\\
\hline
\multirow{2}{*}{0.36} & $\bullet$ & $\bullet$ & $\bullet$ & $\bullet$ & $\times$ & $\times$ & $\times$\\
& \scriptsize{0.86} & \scriptsize{0.84} & \scriptsize{0.83} & \scriptsize{0.81} & \scriptsize{0.8} & \scriptsize{0.77} & \scriptsize{0.76}\\
\hline
\multirow{2}{*}{0.29} & $\bullet$ & $\bullet$ & $\bullet$ & $\bullet$ & $\times$ & $\times$ & $\times$\\
& \scriptsize{0.86} & \scriptsize{0.84} & \scriptsize{0.82} & \scriptsize{0.80} & \scriptsize{0.79} & \scriptsize{0.77} & \scriptsize{0.75}\\
\hline
\multirow{2}{*}{0.22} & $\bullet$ & $\bullet$ & $\bullet$ & $\bullet$ & $\times$ & $\times$ & $\times$\\
& \scriptsize{0.85} & \scriptsize{0.84} & \scriptsize{0.82} & \scriptsize{0.80}$^*$ & \scriptsize{0.78} & \scriptsize{0.76} & \scriptsize{0.75}\\
\hline
\multirow{2}{*}{0.16} & $\bullet$ & $\bullet$ & $\bullet$ & $\times$ & $\times$ & $\times$ & $\times$\\
& \scriptsize{0.85} & \scriptsize{0.83} & \scriptsize{0.81} & \scriptsize{0.80} & \scriptsize{0.77} & \scriptsize{0.76} & \scriptsize{0.74}\\
\hline
\multirow{2}{*}{0.11} & $\bullet$ & $\bullet$ & $\bullet$ & $\times$ & $\times$ & $\times$ & $\times$\\
& \scriptsize{0.85} & \scriptsize{0.83} & \scriptsize{0.81} & \scriptsize{0.79} & \scriptsize{0.77} & \scriptsize{0.75} & \scriptsize{0.73}
\end{tabular}
\caption{Behavior for different parameter choices. In (\ref{ansf}) we fix $k=0$, $L_0 = 0.05$ and vary $\ell$ and $y_0$. If a shell is obtained we set a $\bullet$, if not a $\times$. The small numbers represent the quantity $\Gamma=\sup_r \frac{2m}{r}$ and the shell solution with minimal $\Gamma$ is indicated with a $^*$.\label{table}
}
\end{center}
\end{figure}
Also the influence of increasing $k$ or $L_0$ has been investigated. One observes in both cases that for each value of $\ell$ a shell as in Figure \ref{shell1} only forms for larger values of $y_0$. The qualitative behavior however -including the observation that all shells have $\Gamma \geq 0.8$- does not change. This behavior was expected since the proof of Theorem \ref{new_main_theorem} works for general $k$ and $L_0$.

H{\aa}kan Andr\'{e}asson\\
          Mathematical Sciences\\
          Chalmers University of Technology\\
          University of Gothenburg\\
          S-41296 G\"oteborg, Sweden\\
          email: hand@chalmers.se\\
          \ \\
          David Fajman\\
          Arbeitsbereich Gravitationsphysik\\
          Universit\"at Wien\\
	     A-1090 Wien, Austria\\
          email: David.Fajman@univie.ac.at\\
	    \ \\
          Maximilian Thaller\\
          Mathematical Sciences\\
          Chalmers University of Technology\\
          University of Gothenburg\\
          S-41296 G\"oteborg, Sweden\\
          email: maxtha@chalmers.se
\end{document}